\PassOptionsToPackage{
  colorlinks=true,
  linkcolor=blue,
  citecolor=blue,
  urlcolor=blue,
  hyperfootnotes=false
}{hyperref}

\documentclass[submission,copyright,creativecommons,sharealike, noncommercial]{eptcs}

\providecommand{\event}{GandALF 2026}

\usepackage{placeins}
\usepackage{xr}
\usepackage{pdfpages}
\usepackage{xcolor}

\hypersetup{
  colorlinks=true,
  linkcolor=blue,
  citecolor=blue,
  urlcolor=blue
}

\usepackage{iftex}
\usepackage{subcaption} 
\usepackage{subfiles}

\ifpdf
  \usepackage{underscore}
  \usepackage[T1]{fontenc}
\else
  \usepackage{breakurl}
\fi

\usepackage[fleqn]{amsmath}
\usepackage{amssymb}
\usepackage{latexsym}
\usepackage{amsfonts}
\usepackage{mathtools}
\usepackage{amsthm}
\usepackage{aliascnt}
\usepackage{microtype}
\microtypesetup{nopatch=footnote}

\usepackage[all]{nowidow}

\newtheoremstyle{break}
  {}
  {}
  {}
  {}
  {\bfseries}
  {.}
  {\newline}
  {}

\theoremstyle{break}

\newtheorem{theorem}{Theorem}[section]

\newaliascnt{definition}{theorem}
\newtheorem{definition}[definition]{Definition}
\aliascntresetthe{definition}

\newaliascnt{lemma}{theorem}
\newtheorem{lemma}[lemma]{Lemma}
\aliascntresetthe{lemma}

\newaliascnt{corollary}{theorem}

\aliascntresetthe{corollary}

\newaliascnt{observation}{theorem}

\aliascntresetthe{observation}

\newaliascnt{remark}{theorem}

\aliascntresetthe{remark}

\definecolor{dgreen}{RGB}{5,150,25}
\definecolor{red40}{RGB}{200,20,25}

\usepackage{graphicx}
\usepackage{float}
\usepackage{setspace}

\usepackage{tikz}
\usetikzlibrary{automata,positioning,arrows.meta, calc}

\usepackage[nameinlink,noabbrev]{cleveref}

\crefname{theorem}{Theorem}{Theorems}
\Crefname{theorem}{Theorem}{Theorems}

\crefname{definition}{Definition}{Definitions}
\Crefname{definition}{Definition}{Definitions}

\crefname{lemma}{Lemma}{Lemmas}
\Crefname{lemma}{Lemma}{Lemmas}

\crefname{corollary}{Corollary}{Corollaries}
\Crefname{corollary}{Corollary}{Corollaries}

\crefname{observation}{Observation}{Observations}
\Crefname{observation}{Observation}{Observations}

\crefname{remark}{Remark}{Remarks}
\Crefname{remark}{Remark}{Remarks}

\crefname{conjecture}{Conjecture}{Conjectures}
\Crefname{conjecture}{Conjecture}{Conjectures}

\crefname{algorithm}{Algorithm}{Algorithms}
\Crefname{algorithm}{Algorithm}{Algorithms}

\crefname{figure}{Figure}{Figures}
\Crefname{figure}{Figure}{Figures}

\title{Every Regular Language Has a\\ Bipartite Cactus Automaton}
\author{
Mario Grobler
\institute{University of Bremen\\ Bremen, Germany}
\email{\quad grobler@uni-bremen.de}
\and
Till Koziel
\institute{RWTH Aachen University\\ Aachen, Germany}
\email{\quad till.koziel@rwth-aachen.de}
\and
Sebastian Siebertz
\institute{University of Bremen\\ Bremen, Germany}
\email{\quad siebertz@uni-bremen.de}
}
\def\titlerunning{Every Regular Language Has a Bipartite Cactus Automaton}
\def\authorrunning{M.\ Grobler, T.\ Koziel, S.\ Siebertz}

\begin{document}
\maketitle

\begin{abstract}
Thompson's classical construction turns every regular expression into an equivalent $\varepsilon$-NFA whose transition graph is series-parallel.
The construction uses $\varepsilon$-transitions, and eliminating them by the usual shortcut construction need not preserve planarity or the bound of two on the treewidth of the transition graph.
Book and Chandra proved that every regular language nevertheless has an $\varepsilon$-free NFA with a planar transition graph.
We show that the transition graph can in fact be chosen to be a bipartite cactus: every cycle is even, and any two cycles have at most one vertex in common.
In particular, it is outerplanar and has treewidth at most two.
This is a considerably more simple structure than the general series-parallel guarantee supplied by Thompson's construction.
The proof is a direct induction on parity-typed regular expressions and uses only elementary closure properties of cactus graphs.
\end{abstract}

\section{Introduction}\label{sec:intro}
Nondeterministic finite-state automata (NFAs) are central objects of study in theoretical computer science, see, for example, \cite{Eilenberg1974,HopcroftMotwaniUllman2006,Kleene1956,Kozen1997,RabinScott1959}.
Although they are usually studied through the languages they recognize, the graph structure of their state diagrams is also a natural object of study.
A classical example is Eggan's theorem, which relates the star height of a regular language to the cycle rank of directed transition graphs of automata recognizing it~\cite{Eggan1963}.

Very simple transition graphs arise naturally from Thompson's construction for regular expressions~\cite{Thompson1968RegularExpressionSearch}.
Given a regular expression, the construction produces an equivalent $\varepsilon$-NFA whose transition graph is series-parallel.
This gives a simple structural representation of regular languages, but it relies on $\varepsilon$-transitions.
The usual elimination procedure introduces labelled shortcut transitions along $\varepsilon$-paths, and this operation need not preserve the series-parallel structure, planarity, or the treewidth bound of two.

The graphs obtained in this paper are considerably more restricted than general series-parallel graphs.
A connected graph is a \emph{cactus} if every edge belongs to at most one cycle, equivalently, every block is either a single edge or a cycle.
The cacti form a proper subclass of the outerplanar graphs, and every outerplanar graph is a subgraph of a series-parallel graph, see~\cite{brandstadt1999graph}. 

\begin{figure}[t]
    \centering
    \begin{subfigure}{.40\textwidth}
    \centering
    \begin{tikzpicture}[shorten >=1pt, node distance=1.5cm, on grid, auto]
        \node[state, initial, minimum size=.6cm]   (s)         {$s$};
        \node[state, minimum size=.6cm]            (q2)    [right = of s]    {};
        \node[state, minimum size=.6cm] at (1.5,1.25)           (q1)      {};
        \node[state, minimum size=.6cm] at (1.5,-1.25)            (q3)        {};
        \node[state, accepting, minimum size=.6cm] (f)     [right = of q2] {};

        \path[->]
            (s)     edge    node    {a} (q1)
            (s)     edge    node    {b} (q2)
            (s)     edge    node    {c} (q3)
            (q1)    edge    node    {a} (f)
            (q2)    edge    node    {b} (f)
            (q3)    edge    node    {c} (f)
            ;
    \end{tikzpicture}
    \caption{A natural NFA whose transition graph is planar, but not a cactus}
    \end{subfigure}\hspace{5mm}
    \begin{subfigure}{.40\textwidth}
    \centering
        \begin{tikzpicture}[shorten >=1pt, node distance=1.5cm, on grid, auto]
        \node[state, initial, minimum size=.6cm]   (s)         {$s$};
        \node[state, minimum size=.6cm]            (q2)    [right = of s]    {};
        \node[state, minimum size=.6cm] at (1.5,1.25)           (q1)      {};
        \node[state, accepting, minimum size=.6cm] (q5) [right = of q2] {};
        \node[state, accepting, minimum size=.6cm] (q4) [right = of q1] {};
        \node[state, minimum size=.6cm] at (1.5,-1.25)            (q3)        {};
        \node[state, accepting, minimum size=.6cm] (q6) [right = of q3] {};

        \path[->]
            (s)     edge    node    {a} (q1)
            (s)     edge    node    {b} (q2)
            (s)     edge    node    {c} (q3)
            (q1)    edge    node    {a} (q4)
            (q2)    edge    node    {b} (q5)
            (q3)    edge    node    {c} (q6)
            ;
    \end{tikzpicture}
    \caption{An equivalent NFA whose transition graph is a tree}
    \end{subfigure}
    \caption{Two $\varepsilon$-free NFAs recognizing $\{aa,bb,cc\}$}
    \label{fig:cactus-example}
\end{figure}

In \Cref{fig:cactus-example}, the left automaton is a natural $\varepsilon$-free NFA for $\{aa,bb,cc\}$ whose transition graph is~$K_{2,3}$. This graph is series-parallel, but it is not a cactus: every edge belongs to two cycles. The right automaton is an equivalent $\varepsilon$-free NFA whose transition graph is a tree. The example illustrates how nondeterminism and state duplication can be used to improve the graph structure without changing the accepted language.

Book and Chandra showed that planarity can still be achieved when $\varepsilon$-transitions are not allowed~\cite{BookChandra1976}.
They proved that every regular language is recognized by an $\varepsilon$-free NFA with a planar transition graph.

The corresponding statement fails for deterministic automata~\cite{BookChandra1976}.
This observation initiated a line of work on topological restrictions of automata.
Subsequent work studied planar finite automata~\cite{BezakovaPal1999}, finite automata with restricted state graphs~\cite{KutribMalcherSchneider2022}, and, in the deterministic setting, the genus of regular languages~\cite{BonfanteDeloup2018,BonfanteDeloup2019,BonfanteDeloup2024}.

\medskip
We strengthen the results of Thompson and of Book
and Chandra by proving the following theorem.

\begin{theorem}\label{the:main}
For every regular language, there exists an $\varepsilon$-free NFA that recognizes the language and has a bipartite cactus transition graph.
\end{theorem}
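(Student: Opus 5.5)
We prove the theorem by structural induction on regular expressions, with an invariant strong enough to survive concatenation and star. The natural invariant is a \emph{normalized} NFA: it has a single initial state $s$ with no incoming transitions, a single accepting state $f$ with no outgoing transitions (where $\varepsilon$ is handled separately by a flag saying whether $s$ is also accepting), and a transition graph that is a bipartite cactus. Because $s$ and $f$ are glued to other automata, they are the only places where cycles can be created. So the invariant must also ensure that gluing keeps every cycle even and every edge in at most one cycle.

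Bipartiteness suggests typing each automaton by the parity of its word lengths. In a bipartite graph, every path from $s$ to $f$ has the same length parity. We therefore split every regular language into its even-length and odd-length parts, $L = L_0 \cup L_1$ with $L_p = L \cap (\Sigma\Sigma)^*\Sigma^p$; this is harmless, since the cactus property is closed under disjoint union and under identifying vertices across different components. Regular expressions are then rewritten as \emph{parity-typed} expressions in which each subexpression has a fixed parity. Concatenation adds parities, union is allowed only between equal parities, and star is applied only to even-typed expressions, since $(L_1)^*$ can be rewritten as $(L_1L_1)^*(\varepsilon+L_1)$.

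The inductive steps are as follows.
\begin{itemize}
\item \textbf{Letters.} A single edge $s\to f$.
\item \textbf{Union of equal parity.} Identify the two initial states and the two final states. This creates cycles through both automata, which breaks the cactus property, as in the $K_{2,3}$ example. To avoid this, we keep the accepting states separate: we allow \emph{several} accepting sink states and a single initial source state. Union is then just gluing at $s$, a block-wise union, which is again a cactus, and parities are consistent.
\item \textbf{Concatenation $AB$.} Make one copy of $B$ for each accepting state of $A$, and glue $B$'s initial state onto that accepting state. Copying is again a tree-like gluing at cut vertices, so we still have a cactus. Bipartiteness holds because gluing at a single vertex preserves it. $\varepsilon$-flags are handled by also attaching $B$'s copy at $s$ where needed, which is again a gluing at one vertex.
\item \textbf{Star $A^*$ with $A$ even.} Form cycles by redirecting, for each accepting state $f_i$, the last transition $q\xrightarrow{a}f_i$ so that it also goes to $s$, i.e.\ identify $f_i$ with $s$.
\end{itemize}

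The star step is the main obstacle. Identifying several sinks with $s$ creates cycles that share edges near $s$ whenever two accepting paths share a prefix. Hence we first \emph{unfold} $A$ into a tree-like form with respect to $s$. Because of the invariant that $s$ has no incoming edges and each accepting sink has in-degree one, we duplicate states so that the part of the graph through which $s$ reaches each sink is a tree. Then identifying each sink with $s$ only creates cycles that share the vertex $s$, and hence any two cycles have at most one common vertex. I would try to maintain as an additional invariant that the graph is a tree plus cycles each through $s$ or arising from inner stars (contained in blocks), and to show that unfolding the cactus into its "block tree" rooted at $s$ is finite because cycles are internal to blocks and can be kept intact rather than unfolded. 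Each new cycle has length equal to the length of an accepted word, which is even, so bipartiteness is preserved. Finally, $\varepsilon$ is re-added by making $s$ accepting.
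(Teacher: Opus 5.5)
Your parity typing is the paper's, your union step (gluing at the initial state) is exactly the paper's, and your concatenation is close in spirit to it. The paper attaches a copy of $B$ by a bridge from each source of a last transition rather than gluing at accepting sinks.

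\textbf{The gap is the star step.} Identifying the accepting sinks with $s$ gives an automaton whose initial state has incoming transitions and whose only accepting state is $s$. This destroys the invariant your other steps rely on: a source initial state and accepting sinks. You never restore it, and this breaks correctness, not just bookkeeping.
\begin{itemize}
\item Your automaton for $(aa)^*$ is $s\xrightarrow{a}q\xrightarrow{a}s$. Gluing it at the initial state with one for $\{bb\}$ accepts $aabb$.
\item In $bb(aa)^*$, your concatenation makes the glued state accepting but not a sink. Applying your star to it identifies that state with $s$, and the result accepts $aa\notin(bb(aa)^*)^*$.
\item The obvious repair is a fresh source initial state copying the outgoing transitions of $s$. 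This destroys the cactus property. For $(abcd\cup wxyz)^*$, with $s\xrightarrow{a}1$ and $s\xrightarrow{w}4$, the new state $s'$ closes the cycle $s'\,1\,s\,4\,s'$. That cycle shares the edge $\{s,1\}$ with the star cycle $s\,1\,2\,3\,s$.
\end{itemize}

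Your unfolding argument does not close the gap either. First, making the routes from $s$ to the sinks a tree is not enough. Two sinks behind a common prefix still yield two cycles sharing that prefix, so you need internally disjoint copies, one per sink. Second, and more seriously, nothing in your sketch prevents a cycle from an inner star from lying on the route from $s$ to a sink, entered at one vertex and left at another. Such a cycle cannot be unfolded, and it cannot be kept intact: there are then two $s$-$f$ paths, and the identification puts edges of that cycle on two cycles.

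\textbf{The missing idea} is an invariant preserved by every construction. Every source of a transition into an accepting state must be joined to the initial state by a unique path. Then cycles only hang off at cut vertices and never separate the initial state from an accepting state.

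The paper's star maintains this and keeps the initial state a source, as follows:
\begin{itemize}
\item The first factor is read in a copy $\hat A$.
\item The run leaves $\hat A$ through a bridge $\hat g\xrightarrow{a}s^g$, labelled by the last letter, into a private, non-accepting restart state $s^g$.
\item At $s^g$ hang one return module per last-transition source $h$. This is a copy of the component of $h$ with accepting marks erased and transitions $h\to s^g$; it closes exactly one cycle along the unique path.
\item Also at $s^g$ hang terminal copies of $A$, which carry all the accepting states.
\end{itemize}
Two choices make the construction iterate: returning from the source of the last transition rather than merging accepting states, and keeping the restart state non-accepting.

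A minor point: a star of mixed parity also needs $(u_0\cup u_1)^*=u_0^*(u_1u_0^*)^*$ before your identity for odd stars applies.
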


\medskip
Like Thompson's proof, our proof proceeds by structural induction over a regular expression.
However, we first decompose the expression into its even and odd parts, in such a way that every Kleene star is applied to an even language.
We then maintain an explicit bipartition throughout the construction.
After deleting the initial state, the automaton consists of independent components with exactly one entry each.
Every non-initial source of a transition into an accepting state is joined to the initial state by a unique path in the corresponding local graph.
This simple invariant is what we use for the Kleene-star construction: adding an edge between the ends of such a path creates at most one new cycle and cannot make two cycles overlap.
Our constructions use state duplication to avoid the introduction of $\varepsilon$-transitions.
The graph-theoretic part of the proof consequently uses only that cacti are preserved when they are joined at a single vertex or by a bridge, and when an edge is added between two vertices joined by a unique path.

Our theorem has the nice consequence that Courcelle's theorem applies: every fixed $\mathrm{MSO}_2$-definable property of the transition structure can be decided in linear time, while standard extensions cover broad classes of counting and optimization problems~\cite{Courcelle1990MSO,ArnborgLagergrenSeese1991}.
Furthermore, many problems can additionally be solved with constant-workspace in polynomial time algorithms on cactus graphs, see~\cite{BRIMKOV2017393}. 

\section{Preliminaries}\label{sec:defs}

In this section, we establish the definitions and notation used throughout the
paper and recall the graph-theoretic facts needed in the constructions. We
assume basic familiarity with graph theory and finite automata and refer to the
literature for more background.

An \emph{alphabet} $\Sigma$ is a finite set of symbols. A \emph{word} over
$\Sigma$ is a finite sequence $w=a_1a_2\cdots a_n$ with $a_i\in\Sigma$ for
every $i\in\{1,\ldots,n\}$; the case $n=0$ gives the empty word
$\varepsilon$. The set of all words over $\Sigma$ is denoted by~$\Sigma^*$.
A \emph{nondeterministic finite-state automaton~(NFA)} is
a tuple $A=(Q,\Sigma,\Delta,s,F)$, where $Q$ is a finite set of states,
$\Delta\subseteq Q\times\Sigma\times Q$ is a transition relation, $s\in Q$
is the initial state, and $F\subseteq Q$ is the set of accepting states. For
states $p,q\in Q$, we write
$\Lambda_A(p,q)=\{a\in\Sigma\mid(p,a,q)\in\Delta\}$ for the set of labels on
transitions from $p$ to $q$.

A \emph{run} of $A=(Q,\Sigma,\Delta,s,F)$ on a word
$w=a_1a_2\cdots a_n\in\Sigma^*$ is a sequence of states
$q_0q_1\cdots q_n$ such that $q_0=s$ and
$(q_{i-1},a_i,q_i)\in\Delta$ for every $i\in\{1,\ldots,n\}$. The run is
\emph{accepting} if $q_n\in F$. The language accepted by $A$ is
$L(A)=\{w\in\Sigma^*\mid A\text{ has an accepting run on }w\}$. The
languages accepted by NFAs are exactly the \emph{regular languages}.

\medskip
The \emph{regular expressions} over $\Sigma$ are defined inductively. The
expressions $\varnothing$, $\varepsilon$, and $a$, for every $a\in\Sigma$,
are regular expressions, and if $r_1$ and $r_2$ are regular expressions, then
so are $r_1r_2$, $r_1\cup r_2$, and $r_1^*$. Their languages are given by
$L(\varnothing)=\varnothing$, $L(\varepsilon)=\{\varepsilon\}$,
$L(a)=\{a\}$, $L(r_1r_2)=L(r_1)L(r_2)$,
$L(r_1\cup r_2)=L(r_1)\cup L(r_2)$, and $L(r^*)=L(r)^*$. The languages
defined by regular expressions are precisely the regular languages.

For $p\in\{0,1\}$, a language $L\subseteq\Sigma^*$ has \emph{parity $p$} if
$|w|\equiv p\pmod 2$ for every $w\in L$. The empty language has either
parity. We use addition of parities modulo~$2$.

\medskip
All graphs in this paper are finite, simple, and undirected. For distinct
vertices $u,v$ of a graph $G$, we write $G+\{u,v\}$ for the graph obtained by
adding the edge $\{u,v\}$ if it is not already present. For $X\subseteq V(G)$,
we denote by $G[X]$ the subgraph induced by~$X$. For an edge $e$, we write
$G-e$ for the graph obtained by removing~$e$, and for $v\in V(G)$, the graph
obtained by deleting $v$ and all edges incident with it is denoted by $G-v$.
An edge $e$ of a graph $G$ is a \emph{bridge} if deleting $e$ increases the
number of connected components. A vertex is a \emph{cut vertex} if deleting
it increases the number of connected components. A \emph{block} is a maximal
connected subgraph that has no cut vertex of its own. 
In particular, every
bridge together with its endpoints forms a block, and every isolated vertex
forms a block. A connected graph is a \emph{cactus} if every edge belongs to
at most one cycle. Equivalently, any two cycles of a cactus have at most one vertex in common.
Another equivalent characterization is that every block containing an edge
is either a single edge or a cycle.

\medskip
We use the following elementary closure properties of cacti.

\begin{lemma}\label{lem:cactus-operations}
The following operations preserve cacti.
\begin{enumerate}
    \item Take cacti that are otherwise vertex-disjoint, choose one vertex
    in each, and identify all chosen vertices to a single common vertex.
    \item Take two vertex-disjoint cacti and add one edge with one endpoint in
    each of them.
    \item Let $X$ be a cactus and let $x,y\in V(X)$ be distinct. If $X$ has a unique
    $x$-$y$ path, then $X+\{x,y\}$ is a cactus.
\end{enumerate}
\end{lemma}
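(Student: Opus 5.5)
We prove the three items separately, each through the block characterization: a connected graph is a cactus if and only if every block containing an edge is a single edge or a cycle. The common tool is that every cycle of a graph lies entirely inside one block. Consequently, whenever we can show that the blocks of the new graph are exactly the old blocks, possibly plus one new edge-block or one new cycle, the result follows immediately.

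For item~1, let $G$ be obtained by identifying the chosen vertices $v_1,\dots,v_k$ of cacti $X_1,\dots,X_k$ into a single vertex $v$. $G$ is connected since every $X_i$ is connected and contains $v$. Any cycle of $G$ passing through two different $X_i$ would have to pass through $v$ twice, because $v$ is the only vertex shared between parts. Hence every cycle of $G$ is a cycle of a single $X_i$. An edge of $X_i$ therefore lies in exactly the same cycles in $G$ as in $X_i$, which is at most one.

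For item~2, the added edge $e=\{x,y\}$ joins the two vertex-disjoint components, so $e$ is a bridge of the resulting connected graph. A bridge lies on no cycle. Hence every cycle avoids $e$, and since there is no other connection between the two parts, every cycle lies within one of the two original cacti. The edge condition is thus inherited, and $e$ itself lies in zero cycles.

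Item~3 is the only step requiring care. Let $X' = X+\{x,y\}$. If $\{x,y\}$ were already an edge of $X$, the uniqueness of the $x$-$y$ path would make it that path, and $X'=X$. So assume $e=\{x,y\}$ is new, and let $P$ be the unique $x$-$y$ path in $X$.

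1. The cycles of $X'$ through $e$ correspond exactly to the $x$-$y$ paths in $X$, so $C=P+e$ is the only new cycle. Cycles avoiding $e$ are cycles of $X$.
2. We must rule out that an edge $f$ of $P$ already lay on some cycle $D$ of $X$. Suppose it did, with $f=\{u,w\}$. Then $D-f$ is a $u$-$w$ path avoiding $f$.
3. Replace $f$ in $P$ by $D-f$ to obtain an $x$-$y$ walk. This walk avoids $f$, since $P$ uses $f$ only once and $D-f$ does not use it.
4. Shortcut the walk to an $x$-$y$ path. This path differs from $P$ because it omits $f$, contradicting uniqueness.
5. Therefore every edge of $P$ lies on no cycle of $X$ and on exactly one cycle $C$ of $X'$, while every other edge keeps its cycles from $X$. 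Connectivity is clearly preserved, so $X'$ is a cactus.

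The shortcutting argument in steps~3 and~4 is the only point where uniqueness of $P$ is used. It is the main, though mild, obstacle.
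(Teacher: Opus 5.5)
Your proposal is correct and follows essentially the same argument as the paper: in items~1 and~2 no cycle can cross the identified vertex or the bridge, and in item~3 an old cycle through an edge of $P$ would yield a second $x$-$y$ path, so $P+\{x,y\}$ is the only new cycle and contains no previously cyclic edge; your shortcutting step just spells out the paper's ``alternative path'' remark. The block-characterization framing in your opening paragraph is never actually used, and for item~3 it would be slightly off, since the bridges of $P$ are absorbed into the new cycle block rather than surviving as old blocks, so you could simply drop it.
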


\begin{proof}
For the first statement, every cycle remains inside one of the original
graphs, because a cycle cannot pass through the identified vertex twice. The
edge added in the second statement is a bridge and therefore belongs to no
cycle.

For the third statement there is nothing to prove if $\{x,y\}$ is already an
edge. Otherwise, let $P$ be the unique $x$-$y$ path in~$X$. No edge of $P$
belongs to a cycle of~$X$: such a cycle would provide an alternative path
between the endpoints of that edge, and hence an alternative $x$-$y$ path.
The only new cycle in $X+\{x,y\}$ is therefore $P+\{x,y\}$, and none of its
old edges belonged to a cycle before. Thus every edge of the new graph still
belongs to at most one cycle.
\end{proof}

For an automaton $A=(Q,\Sigma,\Delta,s,F)$, its \emph{transition graph} is
$G(A)=(Q,E)$, where $\{p,q\}\in E$ if and only if $q\ne p$ and there exists $a\in \Sigma$ such that either $(q,a,p)\in \Delta$ or $(p,a,q)\in \Delta$. 
Thus directions and labels are forgotten, we have no multi-edges and and loops give no edge. 
This is natural, as it is common to depict multiple transitions between the same states that have the same direction as a single arc with multiple labels and, as we will show below, our automata will not have loops. 

Graph properties of an automaton
always refer to $G(A)$, and connected components of an automaton mean
connected components of $G(A)$. We will only consider automata with connected
transition graphs and no loops, i.e. transitions of the form $(q,a,q)$. 

A map $\chi:Q\to\{0,1\}$ is a \emph{transition-compatible coloring} of $A$
if $\chi(p)\neq\chi(q)$ for every transition $(p,a,q)\in\Delta$. In
particular, an automaton with such a coloring has no loops, and $\chi$ is a
proper $2$-coloring of its transition graph. If $\chi(s)=0$, then every run
of length $n$ ends in a state of color $n\bmod 2$. Every coloring discussed in the sequel will be transition compatible, in particular, at no point loops are introduced.

\section{From regular expressions to cactus automata}\label{sec:mainTheo}

In this section, we prove \cref{the:main}. We first separate a regular
expression into its even and odd parts. 
This is a simple special case of typed regular expressions with types in a finite monoid, which were considered e.g.\ by
Bojańczyk~\cite[Section~2.1]{Bojanczyk2008Common}. 
The parity typing below is
their specialization to the length morphism $w\mapsto |w|\bmod 2$.
What we call parity normal form is a specialization of
\cite[Theorem~3]{Bojanczyk2008Common}, applied to the product of a morphism
recognizing $L(r)$ and the length-parity morphism. We give a proof of the normal form lemma for completeness and for the graph theorists among us. 
We can then carry a bipartition through the whole construction. The cactus
property is maintained directly using the elementary operations from
\cref{lem:cactus-operations}.

\subsection{Parity normal form}

We call a regular expression \emph{typed with parity $p\in\{0,1\}$} if it is
obtained by the following rules. The expression $\varnothing$ may be given
either parity, $\varepsilon$ has parity~$0$, and every letter has parity~$1$.
The union of two expressions of parity $p$ has parity $p$, the concatenation
of expressions of parities $p$ and $q$ has parity $p+q$, and the star of an
expression of parity~$0$ has parity~$0$. Here and below, parities are added
modulo~$2$. An immediate induction shows that the language of an expression
typed with parity $p$ has parity~$p$.

\begin{lemma}[Parity normal form]\label{lem:parity-normal-form}
For every regular expression $r$, there are regular expressions $r_0$ and
$r_1$, typed with parity $0$ and $1$, respectively, such that
$L(r_p)=\{w\in L(r)\mid |w|\equiv p\pmod 2\}$ for $p\in\{0,1\}$.
Moreover, every star occurring in $r_0$ or $r_1$ is applied to an expression
of parity~$0$.
\end{lemma}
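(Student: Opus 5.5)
We proceed by structural induction on $r$ and construct $r_0$ and $r_1$ simultaneously. At each step we verify two things: that the typing rules are respected, and that $L(r_p)=L(r)\cap E_p$, where $E_p=\{w\mid |w|\equiv p\pmod 2\}$.

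\emph{Base cases.} For $r=\varnothing$, take $r_0=r_1=\varnothing$. For $r=\varepsilon$, take $r_0=\varepsilon$ and $r_1=\varnothing$. For $r=a$, take $r_0=\varnothing$ and $r_1=a$. These are typed correctly, since $\varnothing$ may be given either parity.

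\emph{Union.} For $r=s\cup t$, set $r_p=s_p\cup t_p$. Both summands have parity $p$, and intersection with $E_p$ distributes over union.

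\emph{Concatenation.} For $r=st$, set
\[r_p=s_0t_p\cup s_1t_{1+p}.\]
Each concatenation has parity $p$, so the union is typed with parity $p$. Any word $uv$ with $u\in L(s)$ and $v\in L(t)$ has parity $p$ exactly when the parities of $u$ and $v$ sum to $p$, which gives the claimed language.

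\emph{Star.} This is the only step needing an idea. Let $r=s^*$. A word of $L(s)^*$ factors into pieces from $L(s_0)\cup L(s_1)$, and its parity equals the number of odd pieces modulo~$2$. We group each odd piece together with the even pieces that immediately follow it. Define
\[u=s_1s_0^*s_1s_0^*,\]
which has parity $0$, and set
\[r_0=s_0^*\,u^*,\qquad r_1=s_0^*\,s_1s_0^*\,u^*.\]
Every star here is applied to an expression of parity $0$. The types check: $r_0$ has parity $0$, and $r_1$ has parity $0+1+0+0=1$.

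\emph{Correctness of the star case.} The inclusion $L(r_p)\subseteq L(r)\cap E_p$ is immediate. For the converse, take a factorisation of $w\in L(s)^*$ into nonempty-free pieces; we may drop $\varepsilon$ pieces, or keep them, since they lie in $L(s_0)$. Let $k$ be the number of odd pieces, so that $w\in E_p$ forces $k\equiv p\pmod 2$.

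1. Write $w$ as an initial block of even pieces, followed by blocks each consisting of an odd piece and the even pieces after it.
2. For $p=0$, pair consecutive blocks; each pair matches $u$. For $p=1$, the first odd block matches $s_1s_0^*$, and the remaining blocks are paired.

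The point needing care is that the inductive hypothesis $L(s_0)\cup L(s_1)=L(s)$ splits every piece by parity.
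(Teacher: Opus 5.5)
Your proposal is correct and matches the paper's proof: the atomic, union and concatenation clauses are the same, and in the star case your $u=s_1s_0^*s_1s_0^*$ is the paper's $c=u_1du_1d$ with $d=(u_0)^*$, giving the same $r_0=dc^*$ and $r_1=du_1dc^*$. Your correctness argument also follows the paper's: an initial even block, then blocks made of one odd piece and the even pieces after it, paired up, with the first block left unpaired when $p=1$.
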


\begin{proof}
We proceed by induction over~$r$. For the atomic expressions, put
\begin{align*}
 (\varnothing)_0&=\varnothing,&(\varnothing)_1&=\varnothing,\\
 (\varepsilon)_0&=\varepsilon,&(\varepsilon)_1&=\varnothing,\\
 (a)_0&=\varnothing,&(a)_1&=a.
\end{align*}
For union, define $(u\cup v)_p=u_p\cup v_p$. For concatenation, define
\begin{align*}
 (uv)_0&=(u_0v_0)\cup(u_1v_1),&
 (uv)_1&=(u_0v_1)\cup(u_1v_0).
\end{align*}
These expressions clearly have the asserted types and languages.

It remains to consider $r=u^*$. Put $d=(u_0)^*$ and
$c=u_1d u_1d$, and define
\begin{align*}
 (u^*)_0&=d c^*,& (u^*)_1&=d u_1d c^*.
\end{align*}
Both $u_0$ and $c$ have parity~$0$, so all new stars are applied to even
expressions. To verify the languages, fix a factorization of a word in
$L(u)^*$ into factors from $L(u)$ and classify every factor as even or odd.
Grouping consecutive even factors gives a factorization
$e_0o_1e_1\cdots o_ke_k$, where $e_i\in L(d)$ and
$o_i\in L(u_1)$. If~$k$ is even, pair $o_1,o_2$, then $o_3,o_4$, and so on.
The resulting factorization belongs to $L(dc^*)$. If $k$ is odd, leave $o_1$
unpaired and pair $o_2,o_3$, then $o_4,o_5$, and so on. 
The factorization
belongs to $L(du_1dc^*)$. Conversely, every word produced by either
expression is a product of factors from $L(u)$, and the displayed pairing
shows that the number of odd factors is even in the first expression and odd
in the second. These are precisely the even and odd words in $L(u)^*$.
\end{proof}

Throughout the automata constructions below, when we speak of copies, sometimes indexed by a specified symbol, we always mean fresh, pairwise disjoint, and disjoint from every newly introduced state, unless we explicitly identify specified states. 
In particular, if $y,z$ are indices and $X$ is some set of states, then $X^y$ and $X^z$ denote copies of $X$, where 
$x^i\in X^i$ is the copy of $x\in X$, for $i\in\{y,z\}$, with $X^i \cap X = \varnothing$. Moreover, $X^z\cap X^y = \varnothing$ if and only if $z\ne y$. If no explicit index is mentioned, then it shall be implied that a the copies are labeled by a unique but otherwise unspecified index.
Each constructed automaton contains exactly the states and transitions that are stated in its construction, and a copied
state is accepting exactly when its accepting mark is explicitly retained.

We now inductively construct automata for typed expressions. For an automaton
$A=(Q,\Sigma,\Delta,s,F)$, we define the possible labels of a last transition
at a state $g\in Q$ by
\mbox{$\Gamma_A(g)=\{a\in\Sigma\mid(g,a,f)\in\Delta$} \mbox{$\text{ for some }f\in F\}$}.

To guarantee the stronger graph structure, we introduce \emph{strongly cactus
automata}. The initial state~$s$ separates the transition graph into rooted
components. Each component has exactly one entry state, contains at least one
accepting state, and has no transition back to~$s$. The additional invariant is simple: every non-initial possible source of a last transition is joined to
$s$ by a unique path in its local graph.
\medskip

\begin{definition}[Strongly cactus automaton of parity $p$]
\label{def:compAuto}
Let $p\in\{0,1\}$, let $A=(Q,\Sigma,\Delta,s,F)$ be a connected NFA, and
let $C_1,\ldots,C_n$ be the connected components of $G(A)-s$. Put
$Q_i=V(C_i)$. We call $A$ \emph{strongly cactus of parity $p$} if it
has a coloring $\chi_A:Q\to\{0,1\}$ such that the following conditions hold.
\begin{enumerate}
    \item The coloring $\chi_A$ is transition-compatible,
    $\chi_A(s)=0$, and $\chi_A(f)=p$ for every $f\in F$.
    \item The graph $G(A)$ is a cactus, and no transition enters~$s$.
    \item For every $i\in\{1,\ldots,n\}$, the set
    $\{p\in Q_i\mid (s,a,p)\in \Delta \text{ for some } a\in \Sigma\}$ consists of one state $t_i$, called the \emph{entry
    state} of $C_i$, and $Q_i\cap F\neq\varnothing$.
    \item For every $g\in Q_i$ with $\Gamma_A(g)\neq\varnothing$, the local
    graph $G(A)[Q_i\cup\{s\}]$ has a unique $s$-$g$ path.
\end{enumerate}
Transitions $(s,a,t_i)$ are the \emph{ingoing transitions} of $C_i$, and
transitions with both endpoints in $Q_i$ are its \emph{internal transitions}.
The initial state itself may be accepting.
\end{definition}

For a component $C_i$ of $G(A)-s$, we call
$G(A)[Q_i\cup\{s\}]$ its \emph{local graph}. Thus two local graphs of the
same automaton have only the initial state in common.

By the first condition, $G(A)$ is bipartite, and every word in $L(A)$ has
parity~$p$. By the second condition, its transition graph is a cactus. We call the vertices
in Condition~4 \emph{exposed}. Notice that the unique path from $s$ to
an exposed vertex consists entirely of bridges of the local graph.

\subsection{The atomic expressions}

The atomic typed expressions easily translate into strongly cactus automata.

\begin{lemma}[Automata for the atomic languages]\label{lem:atomic}
The empty language has a strongly cactus automaton of either parity.
There are strongly cactus automata of parity $0$ and $1$ accepting
$\{\varepsilon\}$ and $\{a\}$, respectively, for every $a\in\Sigma$.
\end{lemma}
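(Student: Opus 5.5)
We verify the statement by writing down the smallest possible automata and checking the four conditions of \cref{def:compAuto} one by one. In every case there are at most two states, so the graph-theoretic conditions hold trivially. The only thing to watch is which conditions become vacuous because $G(A)-s$ has no components.

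For the empty language, we take $A_\varnothing=(\{s\},\Sigma,\varnothing,s,\varnothing)$ with $\chi(s)=0$. The coloring is trivially transition-compatible. The condition $\chi(f)=p$ holds vacuously for either $p$, since $F=\varnothing$. The transition graph is a single vertex, which is a connected cactus, and no transition enters $s$. Since $G(A)-s$ is empty, there are no components $C_i$, so Conditions~3 and~4 are vacuous. Hence $A_\varnothing$ is strongly cactus of parity $0$ and also of parity $1$, and $L(A_\varnothing)=\varnothing$.

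For $\{\varepsilon\}$, we take the same single-state automaton but make $s$ accepting: $(\{s\},\Sigma,\varnothing,s,\{s\})$ with $\chi(s)=0$. Now $\chi(f)=0=p$ holds for the only accepting state $f=s$; this is allowed because the definition explicitly permits the initial state to be accepting. Conditions~2--4 hold exactly as before, and the only accepted word is $\varepsilon$.

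For a letter $a$, we take $(\{s,f\},\Sigma,\{(s,a,f)\},s,\{f\})$ with $\chi(s)=0$ and $\chi(f)=1$. The single transition joins differently colored states, and the accepting state has color $1=p$. The graph $G(A)$ is a single edge, which is a cactus, and the only transition leaves $s$ rather than entering it. The graph $G(A)-s$ has exactly one component, with $Q_1=\{f\}$. Its entry state is $t_1=f$, and $Q_1\cap F\neq\varnothing$.

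For Condition~4, note that $\Gamma_A(s)=\{a\}$, but $s\notin Q_1$, so the condition does not apply to $s$. Moreover $\Gamma_A(f)=\varnothing$, because $f$ has no outgoing transitions. Thus Condition~4 is vacuous here; it would in any case hold, since the local graph is a single edge. The only run on $a$ ends in $f$, and no other nonempty word has a run, so the language is $\{a\}$.

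There is no real obstacle here. The one point needing care is that Condition~4 concerns exposed states inside the components $Q_i$ only, so the initial state is never required to satisfy it, even when it is the source of a last transition.
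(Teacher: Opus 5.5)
Your proposal is correct and uses the same three automata and colorings as the paper. You spell out the verification of the four conditions of \cref{def:compAuto}, which the paper dismisses as immediate, and you correctly note that Condition~4 concerns only states in the components $Q_i$, so it never applies to the initial state.
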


\begin{proof}
The following three automata recognize $\varnothing$, $\{\varepsilon\}$, and
$\{a\}$, respectively.
\begin{figure}[H]
  \centering
    \begin{tikzpicture}[shorten >=1pt,node distance=2cm,on grid,auto, initial text=]
      \node[state, initial] (s) {$s$};
    \end{tikzpicture}\hspace{17mm}
    \begin{tikzpicture}[shorten >=1pt,node distance=2cm,on grid,auto, initial text=]
      \node[state, initial, accepting] (s) {$s$};
    \end{tikzpicture}\hspace{17mm}
    \begin{tikzpicture}[shorten >=1pt,node distance=2cm,on grid,auto, initial text=]
      \node[state, initial] (s) {$s$};
      \node[state, accepting] (f) [right=of s] {$f$};
      \path[->] (s) edge node {$a$} (f);
    \end{tikzpicture}
  \caption{The atomic automata for $\varnothing$, $\{\varepsilon\}$, and
  $\{a\}$, in this order.}
  \label{fig:atomic-automata}
\end{figure}
The first automaton may be assigned either parity. Color the only state of
each of the first two automata by~$0$, and color $s$ by $0$ and $f$ by
$1$ in the third. Their transition graphs have at most one edge, and deleting
the initial state leaves at most one component. All conditions in
\cref{def:compAuto} are immediate.
\end{proof}

\subsection{Closure under union}

Let $p\in\{0,1\}$, and let
$A=(Q_A,\Sigma,\Delta_A,s_A,F_A)$ and
$B=(Q_B,\Sigma,\Delta_B,s_B,F_B)$ be strongly cactus automata of
parity~$p$. Take disjoint copies of $A$ and $B$, identify their initial
states, and call the identified state $s_\cup$. Keep all transitions and
accepting states. Thus $s_\cup$ is accepting if and only if $s_A\in F_A$ or
$s_B\in F_B$. Denote the resulting automaton by~$U$.

\begin{lemma}\label{lem:union}
The automaton $U$ constructed above is strongly cactus of parity $p$ and
satisfies $L(U)=L(A)\cup L(B)$.
\end{lemma}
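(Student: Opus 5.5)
The plan is to verify the four conditions of \cref{def:compAuto} for $U$ one at a time, using the coloring $\chi_U$ that agrees with $\chi_A$ on the copy of $Q_A$ and with $\chi_B$ on the copy of $Q_B$. This is well defined on the identified state because $\chi_A(s_A)=\chi_B(s_B)=0$. The language equality is proved separately.

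\textbf{Language.} The key fact is that no transition enters $s_A$ or $s_B$ (Condition~2). Hence every run of $U$ that leaves $s_\cup$ moves into exactly one copy and stays there forever. Indeed, the only shared state is $s_\cup$, and it can never be re-entered. A run of positive length therefore lies entirely in the copy of $A$ or entirely in the copy of $B$, and it is accepting in $U$ iff it is accepting in that automaton. For the empty word, $\varepsilon\in L(U)$ iff $s_\cup\in F_U$, which holds iff $s_A\in F_A$ or $s_B\in F_B$. Together these give $L(U)=L(A)\cup L(B)$.

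\textbf{Conditions 1 and 2.} Every transition of $U$ is a transition of one of the copies, so $\chi_U$ is transition-compatible. We have $\chi_U(s_\cup)=0$, and every accepting state has color $p$: a copied accepting state inherits color $p$, and if $s_\cup$ is accepting then $p=0$ is forced, since $\chi_A(s_A)=0$ and $s_A\in F_A$ (or the same for $B$). The graph $G(U)$ is obtained by identifying one vertex of the cactus $G(A)$ with one vertex of the cactus $G(B)$, so it is a cactus by \cref{lem:cactus-operations}(1). It is connected because both pieces are. No transition enters $s_\cup$, since none entered $s_A$ or $s_B$.

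\textbf{Conditions 3 and 4.} Every edge of $G(U)$ lies inside one copy, so the components of $G(U)-s_\cup$ are exactly the components of $G(A)-s_A$ together with those of $G(B)-s_B$. Each such component keeps its vertex set, its transitions from the initial state, and its accepting states. Its local graph in $U$ is the corresponding local graph in $A$ or $B$, under the identification. Moreover, $\Gamma_U(g)=\Gamma_A(g)$ (or $\Gamma_B(g)$) for non-initial $g$, because a transition leaving $g$ stays inside $g$'s copy, and the accepting states of that copy are unchanged. So Conditions~3 and~4 transfer verbatim.

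\textbf{Main obstacle.} Nothing here is deep. The only delicate points are the bookkeeping of $s_\cup$ being accepting when $p=1$, and the observation that, for $g$ in a copy of $A$, the transitions $(g,a,s_\cup)$ cannot occur. The latter is exactly what keeps $\Gamma_U$ local to each copy.
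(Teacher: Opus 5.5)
Your proof is correct and takes essentially the same route as the paper's. You combine the two colorings at the shared initial state, apply \cref{lem:cactus-operations}(1) for the cactus property, and note that the components of $G(U)-s_\cup$, with their entries, accepting states, local graphs and $\Gamma$-sets, are inherited unchanged; you also use that no transition enters $s_\cup$ to confine every non-empty run to one side, and your explicit remarks that an accepting $s_\cup$ forces $p=0$ and that $\Gamma_U$ stays local to each copy spell out points the paper asserts only in passing.
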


\begin{proof}
The empty word is accepted precisely when it is accepted by $A$ or by~$B$.
Every non-empty run of $U$ chooses, in its first step, a component from one of
the two automata. Since no transition enters $s_\cup$ and there are no
transitions between different components, the run stays on that side. Hence
$L(U)=L(A)\cup L(B)$.

The two colorings agree at their initial states, which both have color~$0$,
and therefore combine into a transition-compatible coloring of~$U$. All
accepting states have color~$p$. The components of $G(U)-s_\cup$ are exactly
the components of $G(A)-s_A$ and $G(B)-s_B$, copied disjointly. Their
entries, accepting states, and exposed vertices are unchanged. The two graphs
meet only in $s_\cup$, so their union is a cactus by
\cref{lem:cactus-operations}. No transition enters the identified initial
state. Therefore $U$ is strongly cactus of parity~$p$.
\end{proof}

\subsection{Closure under concatenation}

Let $A=(Q_A,\Sigma,\Delta_A,s_A,F_A)$ and
$B=(Q_B,\Sigma,\Delta_B,s_B,F_B)$ be strongly cactus of parities $p$
and $q$, respectively. Put $r=p+q$. If $F_B$ is empty, then $B$ has no
component by Condition~3 of \cref{def:compAuto}. Since its transition graph
is connected, $B$ consists only of its non-accepting initial state, and
$L(B)=\varnothing$. In this case the one-state automaton for the empty
language, assigned parity~$r$, recognizes $L(A)L(B)$. We therefore assume
that $F_B\neq\varnothing$.

\pagebreak
We construct an automaton $A\cdot B$ as follows.
\begin{enumerate}
    \item Introduce a fresh initial state $\dot s$ and declare it accepting
    if and only if $s_A\in F_A$ and $s_B\in F_B$.
    \item If $s_A\in F_A$, take copies of all components of $G(B)-s_B$,
    copy their internal transitions, and copy every transition from $s_B$ to
    an entry state as a transition from $\dot s$ to the corresponding copied
    entry state. Keep all copied accepting states. These components deal with
    an empty first factor.
    \item For every $g\in Q_A\setminus\{s_A\}$ with
    $\Gamma_A(g)\neq\varnothing$, let $C_g$ be the component of
    $G(A)-s_A$ containing~$g$. Take one copy $C_g'$ of this component and one
    copy $B^g$ of the whole automaton $B$, whose copied initial state is
    denoted by $s_B^g$. Copy all internal transitions of $C_g$ and all
    transitions of $B$, but erase the accepting marks in $C_g'$. Copy the
    ingoing transitions of $C_g$ as transitions from $\dot s$, and add
    $(g',a,s_B^g)$ for every $a\in\Gamma_A(g)$. Keep all accepting states of
    $B^g$.
    \item If $\Gamma_A(s_A)\neq\varnothing$, take one copy $B^{s_A}$ of $B$,
    with copied initial state $s_B^{s_A}$. For every
    $a\in\Gamma_A(s_A)$, add $(\dot s,a,s_B^{s_A})$, and keep all accepting
    states of this copy. Since no transition enters $s_A$, an accepting run
    of $A$ whose last transition has source $s_A$ consists of that single
    transition. The new transition therefore reads the entire one-letter
    first factor and places the run at the initial state of the copied
    automaton~$B$. One copy of $B$ suffices for all such labels.
\end{enumerate}
The original transitions from $g$ into accepting states of $A$ are retained
in $C_g'$, but their targets are no longer accepting there. Hence a run may
pass an earlier visit to an old accepting state and choose a later occurrence
of a possible last transition as the boundary between the two factors.

\begin{lemma}\label{lem:concat}
The automaton $A\cdot B$ constructed in Steps~1--4 is strongly cactus
of parity $r$ and satisfies $L(A\cdot B)=L(A)L(B)$.
\end{lemma}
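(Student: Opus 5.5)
We verify the two claims separately: first the language equality, then the four conditions of \cref{def:compAuto}.

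\textbf{Language equality.} For $L(A\cdot B)\supseteq L(A)L(B)$, take $u\in L(A)$ and $v\in L(B)$ and split on the form of the accepting run of $A$ on $u$.
\begin{itemize}
\item If $u=\varepsilon$, then $s_A\in F_A$. Either $v=\varepsilon$, so $\dot s$ is accepting, or the run of $B$ on $v$ enters a component of $G(B)-s_B$ and is simulated in the Step~2 copies.
\item If the last transition of the run on $u$ has source $s_A$, then $|u|=1$. The run of $B$ on $v$ is simulated in $B^{s_A}$ after $(\dot s,u,s_B^{s_A})$.
\item Otherwise the last transition has source some $g\neq s_A$, with label $a\in\Gamma_A(g)$, and the whole run before it stays in $C_g$. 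We run in $C_g'$ up to $g'$, take $(g',a,s_B^g)$, and continue in $B^g$.
\end{itemize}
For $\subseteq$, we use that the only transitions leaving each $C_g'$ go to $s_B^g$, and that accepting states exist only at $\dot s$, in the Step~2 copies, and in the copies of $B$. An accepting run therefore either ends at $\dot s$, or stays in a Step~2 copy (giving $\varepsilon\cdot L(B)$), or passes through exactly one transition into some $s_B^g$. The prefix before that transition is a run of $A$ from $s_A$ to $g$. The transition itself reads $a\in\Gamma_A(g)$, so the prefix followed by $a$ lies in $L(A)$, and the suffix lies in $L(B)$. The case $s_B^{s_A}$ is analogous.

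\textbf{Coloring (Condition~1).} Set $\chi(\dot s)=0$, keep $\chi_A$ on the copies $C_g'$, and keep $\chi_B$ on the Step~2 copies. On $B^g$ use $\chi_B+p$: the transition $g'\to s_B^g$ goes from colour $p+1$ (since $g$ precedes an accepting state of colour $p$) to colour $p$. Then accepting states in $B^g$ get colour $q+p=r$. In Step~2, $s_A\in F_A$ forces $p=0$, so $\chi_B$ already gives colour $r$. The same argument covers $B^{s_A}$. If $\dot s$ is accepting, then $p=q=0$, so its colour $0$ equals $r$.

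\textbf{Cactus structure (Condition~2).} Build $G(A\cdot B)$ from pieces using \cref{lem:cactus-operations}. Each piece is either a copy of a local graph of $B$ or of $A$ rooted at $\dot s$, or a local graph of $A$ with $B^g$ attached by the bridge $\{g',s_B^g\}$ (part~2). The pieces are glued at $\dot s$ (part~1). For $B^{s_A}$, the edge $\{\dot s,s_B^{s_A}\}$ is a bridge. No transition enters $\dot s$.

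\textbf{Components and exposed vertices (Conditions~3 and~4).} The components of $G(A\cdot B)-\dot s$ are of three kinds: Step~2 copies, $C_g'\cup B^g$, and $B^{s_A}$. Each has one entry and contains accepting states, since $F_B\ne\varnothing$ and every component of $B$ contains an accepting state.

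The main obstacle is Condition~4: the exposed vertices of the new automaton are the copies of exposed vertices of $B$, and possibly $\dot s$ itself. I would handle this as follows.
\begin{itemize}
\item In a Step~2 copy, uniqueness is inherited from the corresponding local graph of $B$.
\item In $C_g'\cup B^g$, take an exposed $h$ of $B^g$. If $h=s_B^g$, the exposure comes from $\Gamma_B(s_B)\ne\varnothing$. In both cases the $\dot s$--$h$ path is the unique $\dot s$--$g'$ path in $C_g'$ (Condition~4 for $A$), followed by the bridge, followed by the unique $s_B^g$--$h$ path in the relevant local graph of $B$. It is unique because the edge $\{g',s_B^g\}$ is a bridge, and the exposed vertices of $B$ in different local graphs of $B$ are separated from each other by $s_B^g$.
\item In $B^{s_A}$ the argument is the same as for $B^g$, with the bridge $\{\dot s,s_B^{s_A}\}$ in place of $\{g',s_B^g\}$.
\end{itemize}
Finally, erasing the accepting marks in $C_g'$ ensures that no vertex of $C_g'$ is a new exposed vertex except possibly $g'$ itself, which already has a unique path by Condition~4 for $A$.
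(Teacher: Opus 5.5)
Your proposal is correct and follows essentially the same route as the paper. It uses the same case split for the language equality, the shifted colouring $p+\chi_B$ on the copies of $B$, the cactus property via bridges and gluing at $\dot s$ from \cref{lem:cactus-operations}, and Condition~4 by concatenating the unique $\dot s$--$g'$ path, the bridge, and the unique path inside the copy of $B$. One cosmetic point: your opening list of exposed vertices (copies of exposed vertices of $B$, and possibly $\dot s$) is inaccurate, since $\dot s$ is irrelevant to Condition~4 and the list omits $g'$ and $s_B^g$. You do treat both of these afterwards, so nothing is actually missing.
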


\begin{proof}
By Step~1, $\varepsilon$ is accepted exactly when
$\varepsilon\in L(A)\cap L(B)$. Let $u\in L(A)$ and $v\in L(B)$. If
$u=\varepsilon$ and~$v$ is non-empty, the components from Step~2 simulate an
accepting run of $B$ on~$v$. If also $v=\varepsilon$, Step~1 applies. Suppose
that $u$ is non-empty, choose an accepting run of $A$ on~$u$,
and let $(g,a,f)$ be its last transition. If $g\neq s_A$, use the component
from Step~3 indexed by~$g$: simulate the run up to $g'$ in $C_g'$, take the
bridge labeled $a$ to $s_B^g$, and then simulate an accepting run of $B$ on
$v$. If $g=s_A$, the accepting run of $A$ has length one because no
transition enters~$s_A$. Step~4 uses its label $a$ to enter $s_B^{s_A}$ and
then simulates the run of~$B$. This also covers $v=\varepsilon$ when the
copied initial state of $B$ is accepting. Thus
$L(A)L(B)\subseteq L(A\cdot B)$.

Conversely, the empty run is accepting only in the case already handled by
Step~1. Let a non-empty accepting run of $A\cdot B$ be given. If it enters a
component from Step~2, erasing the copy indices gives an accepting run of~$B$,
while $s_A\in F_A$ supplies the empty first factor. Otherwise the run enters
a component from Step~3 or Step~4. All accepting marks in the copied
$A$-component were erased, so in Step~3 the run must use one of the bridge
transitions from $g'$ to~$s_B^g$. If this transition has label~$a$, the definition of~$\Gamma_A(g)$ gives a transition $(g,a,f)$ into an accepting state of~$A$.
Replacing the bridge by this transition and erasing the copy indices gives an
accepting run of $A$ on the prefix. The part after the bridge gives an
accepting run of $B$ on the suffix. In Step~4, the first transition
corresponds directly to a transition $(s_A,a,f)$ into an accepting state of
$A$, and therefore supplies a one-letter first factor. Hence every accepted
word belongs to $L(A)L(B)$.

We define the coloring of the new automaton. Color $\dot s$ by~$0$. In every
copied $A$-component, retain the coloring $\chi_A$. In every Step~2 copy,
retain $\chi_B$. This step can occur only when $p=0$, because $s_A$ has
color~$0$ and is accepting. In every Step~3 copy $B^g$, use the shifted
coloring $p+\chi_B$. The source $g'$ of the bridge has color $1-p$, whereas
$s_B^g$ has color~$p$, so the bridge changes color. Finally, Step~4 can occur
only when $p=1$: a transition from $s_A$ into an accepting state changes the
color from $0$ to~$1$. We therefore use the shifted coloring $1+\chi_B$ in
this copy. Every transition changes color, and all accepting states have
color $p+q=r$.

The local graph of a component from Step~2 is a copied local graph of~$B$.
For Step~3, it consists of a copied local graph of $A$ and a copy of~$G(B)$
joined by the bridge $\{g',s_B^g\}$. For Step~4, it is a copy of~$G(B)$ joined
to $\dot s$ by a bridge. It follows from \cref{lem:cactus-operations} that
every local graph, and then their union at $\dot s$, is a cactus. There is one
entry in every component, every component contains an accepting state of the
copied automaton $B$, and no transition enters $\dot s$.

It remains to verify that every non-initial possible source of a last transition is
exposed. Components from Step~2 inherit this property from~$B$. Consider a
Step~3 component indexed by~$g$. If $s_B^g$ is accepting, then $g'$ is the
source of a transition into an accepting state. The copied local graph of $A$
has a unique $\dot s$-$g'$ path, because $g$ is exposed in~$A$.

Every other possible source of a last transition lies in the copy $B^g$.
There is a unique path from $\dot s$ to $s_B^g$: it is the unique
$\dot s$-$g'$ path followed by the bridge $\{g',s_B^g\}$. If the source is
$s_B^g$, this is the required path. If it is a state $h\neq s_B^g$, then $h$ is
exposed in~$B$, and the unique $s_B^g$-$h$ path in the corresponding copied
local graph extends the path from $\dot s$ to $s_B^g$. Since the bridge is the
only connection between the two copied graphs, the resulting
$\dot s$-$h$ path is unique.

Finally, a Step~4 component is a copy of $G(B)$ attached to $\dot s$ by the
bridge $\{\dot s,s_B^{s_A}\}$. The same argument, now with this bridge as the
initial segment, shows that all non-initial possible sources of last transitions
in this component are exposed. Therefore $A\cdot B$ is strongly cactus of
parity~$r$.
\end{proof}

\subsection{Closure under Kleene star for even languages}

Let $A=(Q,\Sigma,\Delta,s,F)$ be strongly cactus of parity~$0$, and let
$\chi_A$ be a coloring witnessing this. Recall that for $v\in Q$,
$\Gamma_A(v)=\{a\in\Sigma\mid (v,a,f)\in\Delta\text{ for some }f\in F\}$.
Put $L_+=L(A)\setminus\{\varepsilon\}$. Our aim is to construct a strongly
cactus automaton $A^\star$ for $L(A)^*=L_+^*$. The equality holds because
empty factors can be removed from every product without changing the
resulting word.

The naive suppression of $\epsilon$-transitions does not guarantee to yield an automaton with a cactus transition graph, as the example depicted in \cref{fig:non-cactus-star} illustrates.
\begin{figure}[H]
    \centering
    \begin{subfigure}{.49\textwidth}
    \centering
    \begin{tikzpicture}[shorten >=1pt, node distance=1cm, on grid, auto, initial text=]
        \node[state, initial, minimum size=.5cm]   (s)         {$s$};
        \node[state, minimum size=.5cm]            (a)    [right = of s]    {};
        \node[state, minimum size=.5cm]            (b)    [right = of a]  {};
        \node[state, minimum size=.5cm]            (c)    [above right = of b]  {};
        \node[state, accepting, minimum size=.5cm]            (d)    [right = of c]  {};
        \node[state, minimum size=.5cm]            (e)    [below right = of b]  {};
        \node[state, accepting, minimum size=.5cm]            (f)    [right = of e]  {};

        \path[->]
            (s)     edge    node    {} (a)
            (a)     edge    node    {} (b)
            (b)     edge    node    {} (c)
            (b)    edge    node    {} (e)
            (c)    edge    node    {} (d)
            (e)    edge    node    {} (f)
            ;
    \end{tikzpicture}
    \caption{A strongly cactus automaton of parity~$0$}
    \end{subfigure}
    \begin{subfigure}{.49\textwidth}
    \centering
        \begin{tikzpicture}[shorten >=1pt, node distance=1cm, on grid, auto, initial text=]
        \node[state, initial, accepting, minimum size=.5cm]   (s)         {$s$};
        \node[state, minimum size=.5cm]            (a)    [right = of s]    {};
        \node[state, minimum size=.5cm]            (b)    [right = of a]  {};
        \node[state, minimum size=.5cm]            (c)    [above right = of b]  {};
        \node[state, accepting, minimum size=.5cm]            (d)    [right = of c]  {};
        \node[state, minimum size=.5cm]            (e)    [below right = of b]  {};
        \node[state, accepting, minimum size=.5cm]            (f)    [right = of e]  {};

        \path[->]
            (s)     edge    node    {} (a)
            (a)     edge    node    {} (b)
            (b)     edge    node    {} (c)
            (b)    edge    node    {} (e)
            (c)    edge    node    {} (d)
            (d.north)    edge  [bend right]   node  [below]  {$\epsilon$} (s)
            (e)    edge    node    {} (f)
            (f.south)    edge  [bend left]  node    {$\epsilon$} (s)
            ;
    \end{tikzpicture}
    \caption{The same automaton with $\epsilon$-transitions}
    \end{subfigure}
    \begin{subfigure}{.49\textwidth}
    \centering
        \begin{tikzpicture}[shorten >=1pt, node distance=1cm, on grid, auto, initial text=]
        \node[state, initial, accepting, minimum size=.5cm]   (s)         {$s$};
        \node[state, minimum size=.5cm]            (a)    [right = of s]    {};
        \node[state, minimum size=.5cm]            (b)    [right = of a]  {};
        \node[state, minimum size=.5cm]            (c)    [above right = of b]  {};
        \node[state, accepting, minimum size=.5cm]            (d)    [right = of c]  {};
        \node[state, minimum size=.5cm]            (e)    [below right = of b]  {};
        \node[state, accepting, minimum size=.5cm]            (f)    [right = of e]  {};

        \path[->]
            (s)     edge    node    {} (a)
            (a)     edge    node    {} (b)
            (b)     edge    node    {} (c)
            (b)    edge    node    {} (e)
            (c)    edge    node    {} (d)
            (d.north)    edge  [bend right]   node  [above]  {} (a)
            (e)    edge    node    {} (f)
            (f.south)    edge  [bend left]  node    {} (a)
            ;
    \end{tikzpicture}
    \caption{Suppressing $\epsilon$-transitions}
    \end{subfigure}
    \begin{subfigure}{.49\textwidth}
    \centering
        \begin{tikzpicture}[shorten >=1pt, node distance=1cm, on grid, auto, initial text=]
        \node[state, initial, accepting, minimum size=.5cm]   (s)         {$s$};
        \node[state, minimum size=.5cm]            (a)    [right = of s]    {};
        \node[state, minimum size=.5cm]            (b)    [right = of a]  {};
        \node[state, minimum size=.5cm]            (c)    [above right = of b]  {};
        \node[state, accepting, minimum size=.5cm]            (d)    [right = of c]  {};
        \node[state, minimum size=.5cm]            (e)    [below right = of b]  {};
        \node[state, accepting, minimum size=.5cm]            (f)    [right = of e]  {};

        \path[->]
            (s)     edge    node    {} (a)
            (a)     edge    node    {} (b)
            (b)     edge    node    {} (c)
            (b)    edge    node    {} (e)
            (c)    edge    node    {} (d)
            (e)  [] edge   node    {} (f)
            (e.south west) [bend left]    edge    node    {} (s)
            (c.north west) [bend right]   edge    node    {} (s)
            ;
    \end{tikzpicture}
    \caption{Alternative for suppressing $\epsilon$-transitions}
    \end{subfigure}
    \caption{Example of naive $\epsilon$ suppression}
    \label{fig:non-cactus-star}
\end{figure}

We make two observations. 
First, $\Gamma_A(s)=\varnothing$. Indeed,
both $s$ and every accepting state have color~$0$, whereas every transition
must change color. Thus the source of the last transition of a non-empty
accepting run is never~$s$.
Put $P_A=\{g\in Q\setminus\{s\}\mid\Gamma_A(g)\neq\varnothing\}$.
Hence $P_A$ is the set of non-initial states from which a transition into an accepting state
is available.
In particular, the source of the last transition of every non-empty accepting run belongs to~$P_A$.
Second, no transition enters $s$. Once a run leaves $s$, it therefore cannot
return to $s$ and remains in one component of $G(A)-s$. 

For $g\in P_A$, let
$C_g$ be the component of $G(A)-s$ containing $g$, and let $t_g$ be its
unique entry state. A run whose last-transition source is $g$ uses only $s$
and states of $C_g$. For $g\in P_A$, let $L_g$ consist of all words $ua$ such that $A$ has a run
from $s$ to $g$ on~$u$ and $a\in\Gamma_A(g)$. Then
$L_+=\bigcup_{g\in P_A}L_g$. Indeed, if $(g,a,f)$ is the last transition of a
non-empty accepting run, then $g\in P_A$ and the word read by the run belongs
to $L_g$. Conversely, the definition of $\Gamma_A(g)$ shows that every word
in $L_g$ is accepted by $A$.
Different states $g,h\in P_A$ may lie in the same component of $G(A)-s$. In
that case $C_g=C_h$.
The copies indexed by $g$ and $h$ below are nevertheless fresh and disjoint. 

We now explain the construction in detail.  
First, take a copy $\hat A$ of the entire automaton with states $\hat v \in \hat Q$ for every $v\in Q$.
The copied states retain their accepting marks and their coloring.
For every $g\in P_A$, we construct a new gadget $B_g$ as follows.

Take fresh copies $Q^g, \Delta^g$ of all states and transitions of $A$.
If $s\in F$, remove the accepting mark from the copied state $s^g$.
All other copied states retain their accepting mark.
We call the copied components of $G(A)-s$ \emph{terminal modules} of $B_g$.
For each $h\in P_A$ let $C_h^{R,g \to h}$ be fresh copies of the states and transitions of the component $C_h$.
Each copied state of $C_h^{R,g\to h}$ retains the color of the corresponding original state.
We call $C_h^{R,g \to h}$ \emph{return modules} of $B_g$.
All states in $C_h^{R, g\to h}$ have their accepting marks erased.
Denote $s^g$ as the \emph{restart state} of $B_g$ and attach each return module via the copied ingoing transitions $(s^g, a, t_h^{R,g \to h})$, for each $a\in \Lambda_A(s,t_h)$.

Now, for each $h\in P_A$ add return transitions $(h^{R,g \to h}, a, s^g)$ for $a\in \Gamma_A(h)$.
These transitions ensure that a run can continue after reading a word $w\in L_+$.
Attach each $B_g$ to $\hat A$ via bridge transitions $(\hat g, a,s^g)$ for $a\in \Gamma_A(g)$.
Finally, make $\hat s$ accepting, to account for the empty word.
This is permissible, since no transitions enters $\hat s$.
Let the resulting automaton be denoted by $A^\star = (Q^\star, \Sigma, \Delta^\star, \hat s, F^\star)$.
For all copied states $v'\in Q^\star$ with corresponding original state $v\in Q$ define $\chi_{A^\star}(v') := \chi_A(v)$.

The coloring defined in the construction is transition-compatible for $A^\star$.
This coloring is transition compatible on states $\hat v\in \hat Q$.
For $g\in P_A$ it holds that $\chi_A(g) = \chi_{A^\star}(\hat g) = 1$, since $A$ is of parity~0.
Thus $\chi_{A^\star}(s^g) = 0$ respects transition-compatibility. 
Similarly, for $h\in P_A$, $\chi_A(h) = \chi_{A^\star}(h^{R,g \to h}) = 1$, so \mbox{$\chi_{A^\star}(s^g)= 0$} remains transition-compatible.
The transition-compatibility of the coloring for all other states and transitions follows suit.

This replaces the usual $\varepsilon$-transitions in the standard star construction. 
We cannot return to the genuine initial state $\hat s$, because no transition may enter the initial state of a strongly cactus automaton. 
Note why we cannot make the restart state $s^g$ accepting.
Otherwise, the sources of the return transitions would become possible sources of last transitions, even though they may lie on the cycles closed by those return transitions.
This would violate the exposed-source condition.

\Cref{fig:star-two-sources} illustrates the construction when $G(A)-s$ has
only one component and two states $g,h\in P_A$, with transitions
$(g,a,f_g)$ and $(h,b,f_h)$ into accepting states. The figure shows the
gadget $B_g$. There is a separate gadget $B_h$ for a first factor whose
last-transition source is~$h$.

\begin{lemma}\label{lem:positive-gadget}
Let $L_+^+=\bigcup_{k\geq 1}L_+^k$. The words read by runs of $A^\star$ that start at
$s^g$ and end in an accepting state of $B_g$ are precisely
the words in~$L_+^+$. Moreover, the transition graph of
$B_g$ is a cactus and for every terminal module, every source of
a transition into a copied accepting state is joined to $s^g$ by a unique
path.
\end{lemma}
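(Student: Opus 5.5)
The plan is to treat the two parts of the statement separately. For the language part I decompose runs at their visits to the restart state $s^g$. For the graph part I assemble $G(B_g)$ from pieces using \cref{lem:cactus-operations}.

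\emph{Language.} First I record the structure of $B_g$. The only transitions entering $s^g$ are the return transitions $(h^{R,g\to h},a,s^g)$ with $a\in\Gamma_A(h)$; the copied transitions $\Delta^g$ never enter $s^g$, since no transition of $A$ enters $s$. Moreover, $s^g$ is the only common vertex of the terminal modules and the return modules.

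Consider a run of $A^\star$ that starts at $s^g$ and ends in an accepting state of $B_g$. I cut it at each visit to $s^g$. A segment between two consecutive visits must leave $s^g$ through an ingoing transition into some return module $C_h^{R,g\to h}$. It then stays inside that copy of $C_h$, because there are no transitions between different modules. It can come back to $s^g$ only through a return transition, whose source is $h^{R,g\to h}$ and whose label lies in $\Gamma_A(h)$. Erasing copy indices gives a run of $A$ from $s$ to $h$ followed by a letter of $\Gamma_A(h)$, so the segment reads a word of $L_h\subseteq L_+$.

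The final segment cannot end in a return module, because those accepting marks are erased, and it cannot end in $s^g$, which is not accepting. Hence it ends in a terminal module, and erasing indices yields a non-empty accepting run of $A$, that is, a word of $L_+$. Since the final segment is non-empty, the whole word lies in $L_+^k$ for some $k\ge 1$.

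For the converse, write $w=w_1\cdots w_k$ with $k\ge1$ and every $w_i\in L_+=\bigcup_{h\in P_A}L_h$. For $i<k$, choose $h$ with $w_i\in L_h$. Simulate the run of $A$ from $s$ to $h$ inside $C_h^{R,g\to h}$; this is possible because such a run uses only $s$ and $C_h$. Then take the return transition labelled by the last letter of $w_i$. For $w_k$, simulate an accepting run of $A$ in the appropriate terminal module.

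\emph{Cactus structure.} The terminal modules together with $s^g$ form a copy of $G(A)$, which is a cactus. Next, fix a return module for $h$. Its local graph with $s^g$ is a copy of the local graph of $C_h$, hence a cactus. Since $h$ is exposed in $A$, this graph has a unique $s^g$-$h^{R,g\to h}$ path. The return transitions of this module all give the single edge $\{h^{R,g\to h},s^g\}$, and each return module receives return transitions only from its own copy of $h$. So part~3 of \cref{lem:cactus-operations} shows that the module with its return edge is still a cactus. Gluing all these cacti at the common vertex $s^g$ by part~1 gives that $G(B_g)$ is a cactus.

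\emph{Exposed sources.} Let $x$ be the source of a transition into a copied accepting state in a terminal module. Then $x$ is, up to copying, exposed in $A$; note $x\neq s^g$ because $\Gamma_A(s)=\varnothing$. So there is a unique $s^g$-$x$ path inside the copied local graph. Any $s^g$-$x$ path in $B_g$ stays inside that local graph, since leaving it would require passing through $s^g$ a second time. Hence this path is unique in $B_g$ as well.

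The step I expect to need the most care is the claim that each return module closes exactly one new cycle. This relies on the return modules being indexed by $h$ and receiving return transitions only from the copy of $h$ itself, even when several $h$ share a component. Without that, two return edges in one module would create overlapping cycles, and I will state this point explicitly.
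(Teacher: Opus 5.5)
Your proposal is correct and follows essentially the same route as the paper: you cut runs at their visits to the non-accepting restart state $s^g$, which only return transitions enter, and you apply part~3 of \cref{lem:cactus-operations} to each return module and part~1 to glue everything at $s^g$, with path uniqueness inherited because the modules meet only at $s^g$. One small correction: a return module need not close a new cycle at all, since when $h=t_h$ the return edge coincides with the ingoing edge $\{s^g,t_h^{R,g\to h}\}$; your appeal to part~3 already handles this case, so only your closing remark about ``exactly one new cycle'' should say ``at most one''.
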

\begin{figure}[t]
\centering
\resizebox{\textwidth}{!}{
\begin{tikzpicture}[
    ->=Stealth,
    every node/.style={font=\small},
    vertex/.style={circle,draw,minimum size=6.5mm,inner sep=1pt},
    accepting/.style={vertex,double,double distance=1.2pt},
    module/.style={draw,rounded corners,minimum width=42mm,minimum height=9mm,
                   align=center,fill=gray!7}
]
    \node[font=\bfseries] at (1.8,5.2) {(a) The original automaton $A$ with component $C$};
    \node[vertex] (s) at (-0.2,3.2) {$s$};
    \node[vertex] (t) at (0.9,3.2) {$t$};
    \node[vertex] (og) at (2.1,3.8) {$g$};
    \node[vertex] (oh) at (2.1,2.6) {$h$};
    \node[accepting] (ofg) at (3.45,3.8) {$f_g$};
    \node[accepting] (ofh) at (3.45,2.6) {$f_h$};
    \draw[rounded corners] (0.41,2.05) rectangle (3.9,4.35);
    \node[anchor=north west] at (0.52,4.30) {$C$};
    \draw[->] ($(s.west) - (.4,0)$) to (s.west);
    \draw[->] (s) -- (t);
    \draw[->] (t) -- (og);
    \draw[->] (t) -- (oh);
    \draw[->] (og) -- node[above] {$a$} (ofg);
    \draw[->] (oh) -- node[below] {$b$} (ofh);

    \node[font=\bfseries] at (9.55,5.2) {(b) The gadget $B_g$ of $A^\star$};
    \node[accepting] (hs) at (5.05,3.2) {$\hat s$};
    \node[vertex] (ht) at (6.15,3.2) {$\hat t$};
    \node[vertex] (hg) at (7.1,3.8) {$\hat g$};
    \node[vertex] (hh) at (7.1,2.6) {$\hat h$};
    \node[accepting] (hfg) at (8.25,4.2) {$\hat f_g$};
    \node[accepting] (hfh) at (8.25,2.2) {$\hat f_h$};
    \draw[->] (hs) -- (ht);
    \draw[->] (ht) -- (hg);
    \draw[->] (ht) -- (hh);
    \draw[->] (hg) -- node[above,sloped] {$a$} (hfg);
    \draw[->] (hh) -- node[below,sloped] {$b$} (hfh);

    \node[vertex] (sg) at (9.85,3.2) {$s^g$};
    \draw[->,very thick] (hg) -- node[above,sloped] {$a$} (sg);
    
    \draw[rounded corners] (9.25,1.1) rectangle (14.95,4.65);
    \node[anchor = north west] at (9.3, 4.6) {$B_g$};
    \node[module] (rgg) at (12.75,4.075) {};
    \node[]  at  (12.75,4.075) {$C_{g}^{R,g \to g}:\ s^g\leadsto g^{R, g\to g}\xrightarrow{a}s^g$};
    \node[module] (rgh) at (12.75,2.875) {};
    \node[]  at  (12.75,2.875)    {$C_{h}^{R,g \to h}:\ s^g\leadsto h^{R, g\to h}\xrightarrow{b}s^g$};
    \node[module] (tgg) at (12.75,1.675) {};
    \node[]  at  (12.55,1.675)    {$C^{g}:\ s^g\leadsto f_h^{g} \text{ or } s^g\leadsto f_g^{g}$};

    \draw[->] ($(hs.west) - (.4,0)$) to (hs.west);
    \draw[->] (sg) to[bend left=17] (rgg.west);
    \draw[->,very thick] (sg) -- (rgh.west);
    \draw[->,very thick] (sg) to[bend right=9] (tgg.west);

    \node[module,dashed,minimum width=29mm] (bh) at (6.65,0.72)
        {symmetric gadget $B_h$};
    \draw[->,dashed, thick] (hh) to[bend right=24] node[above,sloped] {$b$} (bh.north west);

    \node[align=center,anchor=north] at (12.55,0.67)
        {all four choices are available whenever the run is at $s^g$};
\end{tikzpicture}
}
\caption{A schematic example with one original component and two possible
last-transition sources $g$ and $h$. The thick transition completes the first
factor and reaches the private restart state $s^g$. The other thick arrows
illustrate a possible continuation whose next factor has last-transition
source $h$ and whose final factor has last-transition source $g$. After every
return, all four modules are available again. The component $B_h$, used when
the first factor ends through $h$, is shown only schematically. Only the
relevant vertices and transitions are shown.}
\label{fig:star-two-sources}
\end{figure}
\begin{proof}
Consider a traversal of a return module $C^{R, g \to h}$. Replace its final
transition $(h^{R, g\to h},a,s^g)$ by any transition $(h,a,f)$ witnessing
$a\in\Gamma_A(h)$, replace $s^g$ by $s$, and erase the copy indices. This
gives an accepting run of~$A$ on the same word. Conversely, every run
witnessing that a word belongs to $L_h$ can be copied into $C^{R, g\to h}$ and
completed by the corresponding transition back to~$s^g$. Thus the return
modules together read precisely the words in $L_+$ and return to~$s^g$.

The terminal modules together with $s^g$ are an unmodified copy of $A$.
As such, the terminal modules together read precisely the words in $L_+$ and end in an accepting state.
The modules meet only at $s^g$, and the only transitions entering $s^g$ are the final transitions of
return modules.
Since $s^g$ is not accepting, every accepting run consists of zero or more traversals of return modules followed by one traversal of a terminal module.
It therefore reads a word in~$L_+^+$.

Conversely, let $w=w_1\cdots w_k$, where $k\geq 1$ and $w_i\in L_+$ for every
$i$. For every $i$, choose $h_i\in P_A$ with $w_i\in L_{h_i}$. For $i<k$,
read $w_i$ in the return module $C^{R, g \to h_i}$.
After each of these factors the run is again at $s^g$, so the choice of $h_{i+1}$ is independent of~$h_i$.
Finally, read $w_k$ in the terminal module $C_{h_k}^{g}$, which contains a copy of the chosen accepting run of $A$ on~$w_k$.
This gives an accepting run of the gadget on~$w$.

It remains to verify the graph properties.
For $h\in P_A$, put $X_h=G(A)[V(C_h)\cup\{s\}]$.
The local graph of a terminal module is a copy of $X_h$, with $s$ replaced by~$s^g$, and is therefore a cactus.
Let $v^g$ be the source of a transition into a copied accepting state in this module.
Then $v$ is exposed in~$A$, so $X_h$ has a unique $s$-$v$ path.
Its copy is the unique $s^g$-$v^{g}$ path in the terminal module.

The local graph of the return module $C_v^{R,g\to h}$ is obtained from a copy of
$X_h$ by adding the edge $\{s^g,h^{R, g\to h}\}$.
Since $h$ is exposed in $A$, the graph $X_h$ has a unique $s$-$h$ path.
Hence adding this edge preserves the cactus property by \cref{lem:cactus-operations}: it creates at
most one new cycle.

All modules meet only at~$s^g$, so their union is again a cactus.
Attaching the other modules at $s^g$ cannot create another simple path from $s^g$ to a vertex of a terminal module, because such a path would have to leave $s^g$ through another module and later return to~$s^g$.
Thus the path to every possible last-transition source in a terminal module remains unique.
\end{proof}

In the first copy $\hat A$ two things can happen. A run may simply end in one of the
accepting states retained in this copy. In that case it reads one non-empty
factor from~$L(A)$. Alternatively, it may use one of the new transitions from
$\hat g$ to~$s^g$. The run then starts with a copied transition
$(\hat s,b,\hat t_g)$, simulates a run of $A$ up to~$\hat g$, and takes
$(\hat g,a,s^g)$. This last transition reads the same letter $a$ as a possible
final transition $(g,a,f)$ of~$A$. Hence the whole passage from $\hat s$ to
$s^g$ reads exactly one word from~$L_g$. 

Once the run is at $s^g$, it starts the next factor in exactly the same way as $A$ starts at~$s$: a copied transition enters either a return module $(s^g, a, t_h^{R, g\to h})$ or a terminal module $(s^g, a, t_h^g)$, for $a \in \Lambda_A(s,t_h)$.
At the end of a middle factor, a transition $(h^{R,g \to h}, b,s^g)$ reads its final letter and brings the run back to the same state~$s^g$.
The run can repeat this step as often as desired.
For the last factor it chooses a terminal module and ends in one of the accepting states retained there.

The states $s^g$ are private. If $g\neq h$, then $s^g\neq s^h$, even if
$C_g=C_h$. If two components shared the same restart state, then deleting
$\hat s$ would merge them and the resulting component would have more than
one entry from~$\hat s$. There would also be different paths from $\hat s$ to
the common restart state, which would destroy the unique-path property needed
for the terminal modules.

The automaton $A^\star$ consists exactly of a copy $\hat A$ of $A$ and the components $B_g$
constructed above.

\begin{lemma}\label{lem:kleene}
The automaton $A^\star$ is strongly cactus of parity~$0$ and satisfies
$L(A^\star)=L(A)^*$.
\end{lemma}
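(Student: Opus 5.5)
We prove \cref{lem:kleene} in two parts. First we establish the language equality by decomposing runs at their first exit from $\hat A$ and invoking \cref{lem:positive-gadget}. Then we check the four conditions of \cref{def:compAuto}, using the coloring $\chi_{A^\star}$ already defined in the construction.

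\emph{Language.} The empty word is accepted because $\hat s$ is accepting, and $\varepsilon\in L(A)^*$. Consider a non-empty accepting run. Since no transition enters $\hat s$ and the only transitions leaving $\hat A$ are the bridges $(\hat g,a,s^g)$, the run falls into one of two cases.
\begin{enumerate}
  \item It stays in one component of $\hat A-\hat s$ and ends in a retained accepting state. Erasing hats gives an accepting run of $A$, so the word lies in $L_+$.
  \item It reaches some $\hat g$, takes a bridge $(\hat g,a,s^g)$, and then runs inside $B_g$ from $s^g$ to an accepting state of $B_g$. The prefix up to and including the bridge reads a word of $L_g\subseteq L_+$, as argued before the lemma. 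By \cref{lem:positive-gadget} the remainder reads a word of $L_+^+$.
\end{enumerate}
In either case the word lies in $L_+^+\subseteq L(A)^*$. Conversely, let $w=w_1\cdots w_k$ with $k\ge1$ and each $w_i\in L_+$. If $k=1$, we read $w_1$ in $\hat A$. If $k\geq 2$, we choose $g$ with $w_1\in L_g$, read $w_1$ along $\hat A$ and the bridge into $s^g$, and read $w_2\cdots w_k\in L_+^+$ inside $B_g$ by \cref{lem:positive-gadget}. Together with $L(A)^*=L_+^*$ this gives $L(A^\star)=L(A)^*$.

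\emph{Structure.} Condition~1 follows from the transition-compatibility already verified and from $\chi(\hat s)=0$. Accepting states are copies of accepting states of $A$, which have color $0$, or $\hat s$ itself; note that $s^g$ is deliberately non-accepting. For the remaining conditions, observe that each $B_g$ meets the rest of $A^\star$ only through the edge $\{\hat g,s^g\}$. This holds because $s^g$ is private to $g$ and all modules of $B_g$ meet only at $s^g$. Hence the components of $G(A^\star)-\hat s$ are exactly the copies $\hat C_i$ of the components of $G(A)-s$, each with all gadgets $B_g$ for $g\in \hat C_i$ hanging off it by bridges.

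Condition~2 then follows in two applications of \cref{lem:cactus-operations}. Each local graph of $\hat A$ is a cactus, and $G(B_g)$ is a cactus by \cref{lem:positive-gadget}, so adding the bridges keeps the local graphs cacti by part~2. Gluing the local graphs at $\hat s$ keeps the whole graph a cactus by part~1. Moreover, no transition enters $\hat s$: return transitions enter $s^g$, never $\hat s$.

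For Condition~3, the only transitions leaving $\hat s$ are the copies of transitions leaving $s$. So each component has the unique entry $\hat t_i$, and it contains the copied accepting states of $C_i$.

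\emph{Main obstacle: Condition~4.} We must classify every source of a transition into an accepting state of $A^\star$. There are three kinds.
\begin{enumerate}
  \item Sources $\hat v$ in $\hat A$. These are exposed in $A$.
  \item The vertices $h^{R,g\to h}$ in return modules. Their only outgoing transitions to non-copied targets go to $s^g$, which is not accepting, and all states of the return modules have their accepting marks erased. So they never need to be exposed; this is exactly why $s^g$ was made non-accepting.
  \item Sources $v^g$ in terminal modules.
\end{enumerate}
For the first kind, the unique $s$-$v$ path in $X_i$ has a copy in $\hat A$. Any alternative path in the new local graph would have to enter and leave some $B_g$ through its single bridge, which is impossible. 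Hence uniqueness persists.

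For the third kind, \cref{lem:positive-gadget} gives a unique $s^g$-$v^g$ path inside $B_g$. We concatenate it with the unique $\hat s$-$\hat g$ path, which exists because $g$ is exposed in $A$, and with the bridge $\{\hat g,s^g\}$. The resulting path is unique, since the bridge is a cut edge separating $B_g$ from $\hat s$. This mirrors the Step~3 argument in \cref{lem:concat}.
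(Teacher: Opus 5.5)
Your proof is correct and follows essentially the same route as the paper: you split accepting runs by whether they use a bridge $\{\hat g,s^g\}$ and handle the gadget part with \cref{lem:positive-gadget}, and you get the structural conditions by attaching each cactus $G(B_g)$ by a bridge, gluing the local graphs at $\hat s$, and concatenating unique paths through that bridge. The one small addition needed is in Condition~4, where your list of sources should also rule out $s^g$ itself; this is immediate because $\Gamma_A(s)=\varnothing$ by the parity observation made before the construction.
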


\begin{proof}
We first verify the accepted language. The empty run at $\hat s$ accepts
$\varepsilon$. The runs that remain in the first copy $\hat A$ of $A$
together accept precisely the words in~$L_+$. Indeed, each such run is a
copied non-empty accepting run of~$A$. Conversely, if $A$ has a non-empty
accepting run and $g\in P_A$ is the source of its last transition, then the run can
be copied into the $\hat A$.

If a run uses the transition from $\hat g$ to $s^g$, its prefix up to $s^g$
reads a word from~$L_g$. By \cref{lem:positive-gadget}, the remaining part
reads a word from~$L_+^+$. Hence the runs that use a restart state together
accept precisely $(\bigcup_{g\in P_A}L_g)L_+^+=L_+L_+^+$, the products of at
least two words from~$L_+$. Consequently
$L(A^\star)=\{\varepsilon\}\cup L_+\cup L_+L_+^+=L_+^*=L(A)^*$.

This also covers $P_A=\varnothing$. Every non-empty accepting run of $A$ has
a last transition whose source belongs to $P_A$. Thus $P_A=\varnothing$
implies $L_+=\varnothing$, and our construction leaves only the accepting
initial state~$\hat s$.

We next verify the graph conditions. For $g\in P_A$, recall that
$X_g=G(A)[V(C_g)\cup\{s\}]$. The graph formed by $\hat s$ and $\hat C_g$ is a
copy of~$X_g$, where $\hat C_g \subset \hat Q$ is the copy of $C_g$.
The edge $\{\hat g,s^g\}$ joins this graph to the cactus $G(B_g)$.
It is a bridge because the two graphs are otherwise disjoint.
Hence the local graph of $B_g$ is a cactus.
All local graphs meet only at~$\hat s$, so $G(A^\star)$ is a connected cactus by \cref{lem:cactus-operations}.

Deleting $\hat s$ separates all components of $A^\star$ from one another.
A component $C^\star_g$ of $G(A^\star)-\hat s$ is of the form $\hat C_g\cup \bigcup_{h\in P_A\cap V(C_g)} G(B_h)$.
The unique entry of $C^\star_g$ is~$\hat t_g$.
It contains the accepting states retained in $\hat C_g$ as well as the accepting states of its terminal modules.
No transition enters~$\hat s$; return transitions enter only the private states~$s^g$.

It remains to verify the exposed-source condition. First let $\hat h$ be the
source of a transition into an accepting state retained in~$\hat C_g$. Then $h$
is exposed in~$A$, so $X_g$ has a unique $s$-$h$ path. Its copy is the unique
$\hat s$-$\hat h$ path in the first part of~$C^\star_g$. The bridge
$\{\hat g,s^g\}$ and the graph attached beyond it cannot create a second simple
path to~$\hat h$.

Now let $v^{g}$ be the source of a transition into an accepting state
of a terminal module. Since $g$ is exposed in~$A$, the first copy has a unique
$\hat s$-$\hat g$ path. Together with the bridge $\{\hat g,s^g\}$, this is the
unique $\hat s$-$s^g$ path. By \cref{lem:positive-gadget}, the vertex
$v^{g}$ is joined to $s^g$ by a unique path. Concatenating the two
paths gives the unique path from $\hat s$ to~$v^{g}$.

Finally, retain the coloring $\chi_A$ in every copied component and color
$\hat s$ and all private states $s^g$ by~$0$. Every copied accepting state has
color~$0$. Every entry state has color~$1$, because it is the endpoint of a
transition from $s$, which has color~$0$. Moreover, every $g\in P_{A}$ has
color~$1$, because it is the source of a transition into an accepting state
of color~$0$. Consequently every new transition changes color, and all copied
internal transitions change color because $\chi_A$ is
transition-compatible. We have verified all conditions of
\cref{def:compAuto}. Therefore $A^\star$ is strongly cactus of parity~$0$.
\end{proof}

\subsection{Proof of the main theorem}

We first do the typed induction.

\begin{theorem}\label{thm:typed-automata}
Let $p\in\{0,1\}$ and let $r$ be a regular expression typed with parity~$p$.
Then there exists an NFA accepting~$L(r)$ that is strongly
cactus of parity~$p$. In particular, its transition graph is a bipartite
cactus.
\end{theorem}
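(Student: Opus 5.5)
The proof is a structural induction over the typing derivation of $r$. Since the typing rules determine which constructions are admissible, we induct on the rule used last and invoke the matching closure lemma. The claim is that $r$, typed with parity $p$, has an NFA that is strongly cactus of parity $p$ and recognizes $L(r)$.

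For the base cases, \cref{lem:atomic} already supplies the required automata. For $\varnothing$ with either assigned parity we use the one-state automaton, which by that lemma may be given either parity. For $\varepsilon$ with parity $0$ and for a letter $a$ with parity $1$ we use the other two atomic automata.

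The inductive steps each invoke one earlier lemma.

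\emph{Union.} If $r=u\cup v$ with $u$ and $v$ both typed with parity $p$, the induction hypothesis gives strongly cactus automata of parity $p$ for $u$ and $v$, and \cref{lem:union} combines them.

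\emph{Concatenation.} If $r=uv$ with $u$ of parity $p_1$ and $v$ of parity $p_2$, the induction hypothesis gives automata of those parities. \cref{lem:concat} then yields an automaton of parity $p_1+p_2$, which is exactly the typed parity of $r$. The degenerate case $F_B=\varnothing$ was already settled before that lemma by the one-state empty automaton with the assigned parity.

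\emph{Star.} If $r=u^*$, the typing rules force $u$ to have parity $0$. The induction hypothesis gives a strongly cactus automaton of parity $0$ for $u$, and \cref{lem:kleene} produces one of parity $0$ for $L(u)^*$.

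For the final sentence of the statement, the second condition of \cref{def:compAuto} says that $G(A)$ is a cactus. The first condition gives a transition-compatible coloring, which is a proper $2$-colouring of $G(A)$, so $G(A)$ is bipartite; connectivity is part of the definition. There is no real obstacle here, since all the work sits in the closure lemmas. The one point that needs care is that the star rule applies only to even expressions, so that \cref{lem:kleene} is applicable. This is guaranteed by the typing discipline, not by the language alone.
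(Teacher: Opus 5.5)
Your proposal is correct and matches the paper's proof: a structural induction on the typed expression, with the atomic, union, concatenation and star cases discharged by the corresponding closure lemmas, where the typing rule forces the starred subexpression to have parity $0$. Your justification of the final sentence via the first two conditions of the definition of a strongly cactus automaton is the same observation the paper makes right after that definition.
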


\begin{proof}
Proceed by structural induction on the typed expression~$r$. The atomic cases
are supplied by \cref{lem:atomic}. If $r=u\cup v$, the two subexpressions have
the same parity, and \cref{lem:union} applies. If $r=uv$, apply
\cref{lem:concat}. The parity of the result is the sum of the two input
parities. Finally, if $r=u^*$, the typing rules guarantee that $u$ has
parity~$0$, and \cref{lem:kleene} applies. 
\end{proof}

Then we put together the typed automata.

\begin{proof}[Proof of \cref{the:main}]
Let $r$ be a regular expression for the given language~$L$. By
\cref{lem:parity-normal-form}, there are expressions $r_0$ and $r_1$ typed
with parity $0$ and $1$ whose languages are the even and odd parts of~$L$.
By \cref{thm:typed-automata}, choose strongly cactus automata $A_0$ and
$A_1$ for these expressions. Apply the union construction of
\cref{lem:union} to $A_0$ and~$A_1$. The resulting automaton recognizes
$L(r_0)\cup L(r_1)=L$ and has a bipartite cactus transition graph.
\end{proof}

\section{Conclusion}\label{sec:conclusion}

We proved that every regular language is recognized by an $\varepsilon$-free
NFA whose transition graph is a bipartite cactus. This is substantially more
restrictive than planarity, and also more restrictive than general
series-parallel structure: every cycle is even, and two cycles can meet only
at a cut vertex. Cactus graphs have nice structural properties, that make some generally intractable problems solvable in polynomial or even linear time.

The proof is constructive, but its normalization and closure steps duplicate
components and may introduce many auxiliary states. We have not optimized the
construction or established tight state-complexity bounds. Determining the
smallest possible blow-up, either as a function of the input automata or
relative to a minimal DFA, is therefore a natural remaining problem.

For deterministic automata, genus and crossing-number parameters offer
complementary ways to measure how far a regular language is from admitting a
planar deterministic representation. Clarifying the hierarchies induced by
these parameters, and comparing their behavior with the nondeterministic
bipartite-cactus representations proved here, would sharpen the topological
theory of regular languages.

Another direction is to identify which language-theoretic restrictions are
forced by planarity, outerplanarity, or the cactus property in deterministic,
unambiguous, and finitely ambiguous automata. It would also be useful to
determine which additional automata-theoretic properties can be imposed
simultaneously with a bipartite cactus transition graph while retaining a
representation theorem for all regular languages.

\bibliographystyle{eptcs}
\bibliography{resources.bib}

\begin{thebibliography}{10}
\providecommand{\bibitemdeclare}[2]{}
\providecommand{\surnamestart}{}
\providecommand{\surnameend}{}
\providecommand{\urlprefix}{Available at }
\providecommand{\url}[1]{\texttt{#1}}
\providecommand{\href}[2]{\texttt{#2}}
\providecommand{\urlalt}[2]{\href{#1}{#2}}
\providecommand{\doi}[1]{doi:\urlalt{https://doi.org/#1}{#1}}
\providecommand{\eprint}[1]{arXiv:\urlalt{https://arxiv.org/abs/#1}{#1}}
\providecommand{\bibinfo}[2]{#2}

\bibitemdeclare{article}{ArnborgLagergrenSeese1991}
\bibitem{ArnborgLagergrenSeese1991}
\bibinfo{author}{Stefan \surnamestart Arnborg\surnameend},
  \bibinfo{author}{Jens \surnamestart Lagergren\surnameend} \&
  \bibinfo{author}{Detlef \surnamestart Seese\surnameend}
  (\bibinfo{year}{1991}): \emph{\bibinfo{title}{Easy Problems for
  Tree-Decomposable Graphs}}.
\newblock {\slshape \bibinfo{journal}{Journal of Algorithms}}
  \bibinfo{volume}{12}(\bibinfo{number}{2}), pp. \bibinfo{pages}{308--340},
  \doi{10.1016/0196-6774(91)90006-K}.

\bibitemdeclare{misc}{BezakovaPal1999}
\bibitem{BezakovaPal1999}
\bibinfo{author}{Ivona \surnamestart Bez{\'a}kov{\'a}\surnameend} \&
  \bibinfo{author}{Martin \surnamestart P{\'a}l\surnameend}
  (\bibinfo{year}{1999}): \emph{\bibinfo{title}{Planar Finite Automata}}.
\newblock \bibinfo{howpublished}{Student Science Conference, Comenius
  University, Bratislava}.

\bibitemdeclare{inproceedings}{Bojanczyk2008Common}
\bibitem{Bojanczyk2008Common}
\bibinfo{author}{Miko{\l}aj \surnamestart Boja{\'n}czyk\surnameend}
  (\bibinfo{year}{2008}): \emph{\bibinfo{title}{The Common Fragment of {ACTL}
  and {LTL}}}.
\newblock In \bibinfo{editor}{Roberto~M. \surnamestart Amadio\surnameend},
  editor: {\slshape \bibinfo{booktitle}{Foundations of Software Science and
  Computational Structures}}, {\slshape \bibinfo{series}{Lecture Notes in
  Computer Science}} \bibinfo{volume}{4962}, \bibinfo{publisher}{Springer}, pp.
  \bibinfo{pages}{172--185}, \doi{10.1007/978-3-540-78499-9_13}.

\bibitemdeclare{article}{BonfanteDeloup2018}
\bibitem{BonfanteDeloup2018}
\bibinfo{author}{Guillaume \surnamestart Bonfante\surnameend} \&
  \bibinfo{author}{Florian \surnamestart Deloup\surnameend}
  (\bibinfo{year}{2018}): \emph{\bibinfo{title}{The Genus of Regular
  Languages}}.
\newblock {\slshape \bibinfo{journal}{Mathematical Structures in Computer
  Science}} \bibinfo{volume}{28}(\bibinfo{number}{1}), pp.
  \bibinfo{pages}{14--44}, \doi{10.1017/S0960129516000037}.

\bibitemdeclare{article}{BonfanteDeloup2019}
\bibitem{BonfanteDeloup2019}
\bibinfo{author}{Guillaume \surnamestart Bonfante\surnameend} \&
  \bibinfo{author}{Florian~L. \surnamestart Deloup\surnameend}
  (\bibinfo{year}{2019}): \emph{\bibinfo{title}{Decidability of Regular
  Language Genus Computation}}.
\newblock {\slshape \bibinfo{journal}{Mathematical Structures in Computer
  Science}} \bibinfo{volume}{29}(\bibinfo{number}{9}), pp.
  \bibinfo{pages}{1428--1443}, \doi{10.1017/S0960129519000057}.

\bibitemdeclare{article}{BonfanteDeloup2024}
\bibitem{BonfanteDeloup2024}
\bibinfo{author}{Guillaume \surnamestart Bonfante\surnameend} \&
  \bibinfo{author}{Florian~L. \surnamestart Deloup\surnameend}
  (\bibinfo{year}{2024}): \emph{\bibinfo{title}{The Genus of Regular Languages
  and Directed Graph Emulators}}.
\newblock {\slshape \bibinfo{journal}{Theoretical Computer Science}}
  \bibinfo{volume}{1000}, p. \bibinfo{pages}{114579},
  \doi{10.1016/j.tcs.2024.114579}.

\bibitemdeclare{article}{BookChandra1976}
\bibitem{BookChandra1976}
\bibinfo{author}{Ronald~V. \surnamestart Book\surnameend} \&
  \bibinfo{author}{Ashok~K. \surnamestart Chandra\surnameend}
  (\bibinfo{year}{1976}): \emph{\bibinfo{title}{Inherently Nonplanar
  Automata}}.
\newblock {\slshape \bibinfo{journal}{Acta Informatica}}
  \bibinfo{volume}{6}(\bibinfo{number}{1}), pp. \bibinfo{pages}{89--94},
  \doi{10.1007/BF00263745}.

\bibitemdeclare{book}{brandstadt1999graph}
\bibitem{brandstadt1999graph}
\bibinfo{author}{Andreas \surnamestart Brandst{\"a}dt\surnameend},
  \bibinfo{author}{Van~Bang \surnamestart Le\surnameend} \&
  \bibinfo{author}{Jeremy~P \surnamestart Spinrad\surnameend}
  (\bibinfo{year}{1999}): \emph{\bibinfo{title}{Graph classes: a survey}}.
\newblock \bibinfo{publisher}{SIAM}, \doi{10.1137/1.9780898719796}.

\bibitemdeclare{article}{BRIMKOV2017393}
\bibitem{BRIMKOV2017393}
\bibinfo{author}{Boris \surnamestart Brimkov\surnameend} \&
  \bibinfo{author}{Illya~V. \surnamestart Hicks\surnameend}
  (\bibinfo{year}{2017}): \emph{\bibinfo{title}{Memory efficient algorithms for
  cactus graphs and block graphs}}.
\newblock {\slshape \bibinfo{journal}{Discrete Applied Mathematics}}
  \bibinfo{volume}{216}, pp. \bibinfo{pages}{393--407},
  \doi{10.1016/j.dam.2015.10.032}.
\newblock \bibinfo{note}{Graph-theoretic and Polyhedral Combinatorics Issues
  and Approaches in Imaging Sciences}.

\bibitemdeclare{article}{Courcelle1990MSO}
\bibitem{Courcelle1990MSO}
\bibinfo{author}{Bruno \surnamestart Courcelle\surnameend}
  (\bibinfo{year}{1990}): \emph{\bibinfo{title}{The Monadic Second-Order Logic
  of Graphs. {I}. Recognizable Sets of Finite Graphs}}.
\newblock {\slshape \bibinfo{journal}{Information and Computation}}
  \bibinfo{volume}{85}(\bibinfo{number}{1}), pp. \bibinfo{pages}{12--75},
  \doi{10.1016/0890-5401(90)90043-H}.

\bibitemdeclare{article}{Eggan1963}
\bibitem{Eggan1963}
\bibinfo{author}{L.~C. \surnamestart Eggan\surnameend} (\bibinfo{year}{1963}):
  \emph{\bibinfo{title}{Transition Graphs and the Star-Height of Regular
  Events}}.
\newblock {\slshape \bibinfo{journal}{Michigan Mathematical Journal}}
  \bibinfo{volume}{10}(\bibinfo{number}{4}), pp. \bibinfo{pages}{385--397},
  \doi{10.1307/mmj/1028998975}.

\bibitemdeclare{book}{Eilenberg1974}
\bibitem{Eilenberg1974}
\bibinfo{author}{Samuel \surnamestart Eilenberg\surnameend}
  (\bibinfo{year}{1974}): \emph{\bibinfo{title}{Automata, Languages, and
  Machines, Volume A}}.
\newblock {\slshape \bibinfo{series}{Pure and Applied
  Mathematics}}~\bibinfo{volume}{59}, \bibinfo{publisher}{Academic Press},
  \bibinfo{address}{New York}.

\bibitemdeclare{book}{HopcroftMotwaniUllman2006}
\bibitem{HopcroftMotwaniUllman2006}
\bibinfo{author}{John~E. \surnamestart Hopcroft\surnameend},
  \bibinfo{author}{Rajeev \surnamestart Motwani\surnameend} \&
  \bibinfo{author}{Jeffrey~D. \surnamestart Ullman\surnameend}
  (\bibinfo{year}{2006}): \emph{\bibinfo{title}{Introduction to Automata
  Theory, Languages, and Computation}}, \bibinfo{edition}{3} edition.
\newblock \bibinfo{publisher}{Addison-Wesley}.

\bibitemdeclare{incollection}{Kleene1956}
\bibitem{Kleene1956}
\bibinfo{author}{Stephen~C. \surnamestart Kleene\surnameend}
  (\bibinfo{year}{1956}): \emph{\bibinfo{title}{Representation of Events in
  Nerve Nets and Finite Automata}}.
\newblock In \bibinfo{editor}{Claude~E. \surnamestart Shannon\surnameend} \&
  \bibinfo{editor}{John \surnamestart McCarthy\surnameend}, editors: {\slshape
  \bibinfo{booktitle}{Automata Studies}}, {\slshape \bibinfo{series}{Annals of
  Mathematics Studies}}~\bibinfo{volume}{34}, \bibinfo{publisher}{Princeton
  University Press}, \bibinfo{address}{Princeton, NJ}, pp.
  \bibinfo{pages}{3--42}, \doi{10.1515/9781400882618-002}.

\bibitemdeclare{book}{Kozen1997}
\bibitem{Kozen1997}
\bibinfo{author}{Dexter~C. \surnamestart Kozen\surnameend}
  (\bibinfo{year}{1997}): \emph{\bibinfo{title}{Automata and Computability}}.
\newblock \bibinfo{series}{Undergraduate Texts in Computer Science},
  \bibinfo{publisher}{Springer}, \bibinfo{address}{New York},
  \doi{10.1007/978-1-4612-1844-9}.

\bibitemdeclare{article}{KutribMalcherSchneider2022}
\bibitem{KutribMalcherSchneider2022}
\bibinfo{author}{Martin \surnamestart Kutrib\surnameend},
  \bibinfo{author}{Andreas \surnamestart Malcher\surnameend} \&
  \bibinfo{author}{Christian \surnamestart Schneider\surnameend}
  (\bibinfo{year}{2022}): \emph{\bibinfo{title}{Finite Automata with Undirected
  State Graphs}}.
\newblock {\slshape \bibinfo{journal}{Acta Informatica}}
  \bibinfo{volume}{59}(\bibinfo{number}{1}), pp. \bibinfo{pages}{163--181},
  \doi{10.1007/s00236-021-00402-0}.

\bibitemdeclare{article}{RabinScott1959}
\bibitem{RabinScott1959}
\bibinfo{author}{Michael~O. \surnamestart Rabin\surnameend} \&
  \bibinfo{author}{Dana \surnamestart Scott\surnameend} (\bibinfo{year}{1959}):
  \emph{\bibinfo{title}{Finite Automata and Their Decision Problems}}.
\newblock {\slshape \bibinfo{journal}{IBM Journal of Research and Development}}
  \bibinfo{volume}{3}(\bibinfo{number}{2}), pp. \bibinfo{pages}{114--125},
  \doi{10.1147/rd.32.0114}.

\bibitemdeclare{article}{Thompson1968RegularExpressionSearch}
\bibitem{Thompson1968RegularExpressionSearch}
\bibinfo{author}{Ken \surnamestart Thompson\surnameend} (\bibinfo{year}{1968}):
  \emph{\bibinfo{title}{Programming Techniques: Regular Expression Search
  Algorithm}}.
\newblock {\slshape \bibinfo{journal}{Communications of the ACM}}
  \bibinfo{volume}{11}(\bibinfo{number}{6}), pp. \bibinfo{pages}{419--422},
  \doi{10.1145/363347.363387}.

\end{thebibliography}
\end{document}